\def\bs{\boldsymbol}
\newcommand{\Ac}{\mathcal{A}}
\newcommand{\B}{\mathbf{B}}
\newcommand{\Bt}{\widetilde{\mathbf{B}}}
\newcommand{\Cc}{\mathcal{C}}
\newcommand{\D}{\mathbf{D}}
\newcommand{\Ec}{\mathcal{E}}
\newcommand{\Gc}{\mathcal{G}}
\newcommand{\g}{\mathbf{g}}
\newcommand{\gp}{g^\prime}
\newcommand{\K}{\mathbf{K}}
\newcommand{\Mcc}{\mathcal{M}}
\newcommand{\Pc}{\mathcal{P}}
\renewcommand{\P}{\mathbf{P}}
\newcommand{\Q}{\mathbf{Q}}
\newcommand{\R}{\mathbf{R}}
\newcommand{\Rc}{\mathbb{R}}
\newcommand{\s}{\mathbf{s}}
\newcommand{\Tc}{\mathcal{T}}
\newcommand{\Vc}{\mathcal{V}}
\newcommand{\w}{\mathbf{w}}
\newcommand{\X}{\mathbf{X}}
\newcommand{\x}{\mathbf{x}}
\newcommand{\y}{\mathbf{y}}
\newcommand{\0}{\mathbf{0}}
\newcommand{\1}{\mathbf{1}}
\renewcommand{\H}{\mathbf{H}}
\newcommand{\tr}{\mathrm{Tr}}
\newtheorem{thm}{Theorem}
\newtheorem{lem}{Lemma}
\newcommand{\Valpha}{\boldsymbol \alpha}
\newcommand{\vbeta}{\boldsymbol \beta}
\newcommand{\vpsi}{\boldsymbol \psi}
\newcommand{\vtheta}{\boldsymbol \theta}
\newcommand{\Diag}{\mathop{\mathrm{diag}}}
\newcommand{\blind}{1}
\begin{document}

\def\spacingset#1{\renewcommand{\baselinestretch}%
{#1}\small\normalsize} \spacingset{1}


\if1\blind
{
  \title{\bf Spatial Heterogeneous Additive Partial Linear Model:\\ A Joint Approach of Bivariate Spline and Forest Lasso}
  \author{Xin Zhang\\
    Department of Statistics, Iowa State University\\
    Shan Yu \\
   Department of Statistics, University of Virginia\\
     Zhengyuan Zhu\\
     Department of Statistics, Iowa State University\\
     Xin Wang\\
    Department of Mathematics and Statistics, San Diego State University\\
     }
     \date{}
  \maketitle
} \fi

\if0\blind
{
  \bigskip
  \bigskip
  \bigskip
  \begin{center}
    {\LARGE\bf Spatial Heterogeneous Additive Partial Linear Model: A Joint Approach of Bivariate Spline and Forest Lasso}
\end{center}
  \medskip
} \fi

\bigskip
\begin{abstract}
Identifying spatial heterogeneous patterns has attracted a surge of research interest in recent years, due to its important applications in various scientific and engineering fields. In practice the spatially heterogeneous components are often mixed with components which are spatially smooth, making the task of identifying the heterogeneous regions more challenging. 
In this paper, we develop an efficient clustering approach to identify the model heterogeneity of the spatial additive partial linear model. 
Specifically, we aim to detect the spatially contiguous clusters based on the regression coefficients while introducing a spatially varying intercept to deal with the smooth spatial effect.
On the one hand, to approximate the spatial varying intercept, we use the method of bivariate spline over triangulation, which can effectively handle the data from a complex domain.
On the other hand, a novel fusion penalty termed the forest lasso is proposed to reveal the spatial clustering pattern.
Our proposed fusion penalty has advantages in both the estimation and computation efficiencies when dealing with large spatial data.
Theoretically properties of our estimator are established, and simulation results show that our approach can achieve more accurate estimation with a limited computation cost compared with the existing approaches. To illustrate its practical use, we apply our approach to analyze the spatial pattern of the relationship between land surface temperature measured by satellites and air temperature measured by ground stations in the United States.
\end{abstract}

\noindent%
{\it Keywords:}   Bivariate splines; Forest lasso; Penalized least squares; Spatial coefficient clustering; Spatial regression model.
\vfill

\newpage
\spacingset{1.5} 


\section{Introduction}\label{Section: introduction}

Large scale spatial data analysis has received significant research interests in recent years, thanks to its ever-increasing research applications in the diverse disciplines, including the fields of agriculture \citep{cho2010geographically, imran2013modeling, wang2016loss, zhang2023simulation}, geology \citep{chen2011mineral, jiang2017comparative, lewis2020modeling,zhou2020satellite}, biology \citep{dark2004biogeography, perez2010spatial, wang2020simultaneous} and social science \citep{takagi2012neighborhood, nicholson2019spatial,you2020social}, etc.
Usually, data from a large spatial domain often exhibit heterogeneity in space in the sense that the model describing the relationship among multiple spatially indexed variables may have different parameters in different regions. 
To capture such spatial variability, one widely adopted approach is to assume the spatial variation is sufficiently smooth, and fit the data with some spatially varying coefficent model \citep{fotheringham2003geographically,gelfand2003spatial,wang2016efficient,yu2019estimation,yu2022spatiotemporal}.
However, in many research applications \citep{li2019spatial,hu2020bayesian,zhu2014spatially} the spatial pattern stays homogeneous in the interior of some regions while changes abruptly at their boundaries.
In such applications, smoothing obscures the boundary, and it is often crucial to identify heterogeneous cluster patterns over the spatial domain.

In general it is a non-trivial task to detect the underlying spatial heterogeneous clusters from observations. 
First of all,  many of the existing spatial clustering approaches identify the cluster patterns based on physical spatial structures \citep{tung2001spatial} or spatial direct observations \citep{luo2003spatial}.
In many research applications, it is of great need to cluster the spatial locations based on the latent relationship between the response and explanatory variables.
For instance, due to the different land cover types, the relationship among atmosphere conditions (e.g. air temperature and humidity) and geological properties (e.g. elevation) often exhibit complex spatial cluster patterns \citep{zhang2019distributed}.
Secondly, it is necessary and reasonable to require the spatial contiguity of the identified clusters, which can facilitate further interpretations and studies.
But such spatial contiguity cannot be guaranteed without extra procedures \citep{lavigne2012model,zhang2014spatio}, or leads to complex model design and tremendous computation \citep{li2015bayesian,wang2019spatial}.
Thirdly, these underlying spatial clusters usually have highly-irregular boundaries, which cannot be captured by some of the existing methods. For example, \cite{lee2017cluster} detected the spatial clusters by testing the spatial scan statistics of pre-determined circular windows. 
But these regular-shape windows would often cause serious misclassification when irregular and complex boundaries exist.

To overcome the above limitations, a few recent works adopted the fusion penalization technique.
The key idea of this approach is to penalize the differences of adjacent model parameters, where the adjacency is defined by the spatial relationship of the observations.
By shrinking the difference to zero, the adjacent locations with the same model parameters will form the spatially contiguous clusters, which could have highly-irregular shapes.
However, designing a clustering approach with the fusion penalty commonly faces the dilemma of `speed' versus `accuracy', especially involving massive spatial locations (see Section \ref{Section: RelatedWorks} for a detailed discussion).
Meanwhile, most of the existing works only focus on the simple parametric model, i.e. the linear regression model, and identify the spatial cluster patterns based on both regression coefficients and intercept.
These models assume that the spatial effect, represented by intercept, also exhibits the cluster pattern.
However, this assumption may cause a significant loss in the estimation efficiency, especially in analyzing data with strong spatial correlation. 
Our goal of this work is to address the above challenges and develop a simple but efficient approach for spatially cluster detection.
The main results of this paper are summarized as follows:
\vspace{-.05in}
\begin{itemize}
	\item We propose a spatial heterogeneous additive partial linear model (SHAPLM) for the spatial contiguous cluster detection problem.
	Specifically, our model identifies the spatial clusters according to the coefficient of the linear components, while a common nonlinear intercept is added to deal with the spatial smoothing effect. 
	Compared with the existing parametric regression models, the proposed partial linear model has better fitting performance, especially for data with strong spatial correlation.

	\vspace{-.05in}

	\item We propose a novel fusion penalized method, termed as the forest lasso, to identify spatial heterogeneous cluster pattern of linear coefficients.
	The forest lasso is an adaptive fusion penalty, of which the adaptive weights are from the average of the estimations over multiple spanning trees.
	We show that the forest lasso can notably improve the estimation accuracy while maintaining a low computation complexity. 

	\vspace{-.05in}

	\item To estimate the nonlinear intercept, we adopt the method of bivariate spline over triangulation \citep{lai2013bivariate}. 
	This spline method can effectively handle the data from the complex domain.
	With the spline approximation, the semiparametric estimation problem reduces to a doubly penalized least square problem, which can be easily solved in numerical.
	Theoretically, we show that our proposed estimators are asymptotically consistent.

	\vspace{-.05in}

	\item In the end, our model is applied to analyze the spatial cluster pattern of the relationship between two types of temperatures in the United States.
	We reveal spatial cluster patterns based on the estimated linear coefficients at different time periods over the year.
	Compared with the existing approaches, it is shown that the estimation results of our approach have a better interpretation.
	\vspace{-.05in}
\end{itemize}
The article is organized as follows.
In Section \ref{Section: RelatedWorks}, we will review related works on the fusion penalized clustering approaches. 
In Section \ref{Section: Model}, we first formulate our spatial heterogeneous additive partial linear model (SHAPLM) and then propose our estimation procedure.
The theoretical properties of our model estimator are studied in Section \ref{Section: analysis}.
The simulation experiments and real data analysis are provided in Section \ref{Section: Simulation} and Section \ref{Section: RealData}, respectively.
In the end, Section \ref{Section: conclusion} concludes this paper.


\section{Related Works}\label{Section: RelatedWorks}

Our work mainly focuses on identifying spatial heterogeneous clusters with the fusion penalty.
In the literature, the fusion penalization approach has been broadly adopted to perform the model-based clustering analysis.
In \cite{pan2013cluster}, the penalized regression-based clustering (PRclust) approach was proposed. 
The PRclust can identify the clusters via a pairwise coefficient fusion penalization. However, this method only aims to solve the classic clustering problem of individual responses.
\cite{tang2016fused} proposed the Fused Lasso Approach in Regression Coefficients Clustering (FLARCC) to identify heterogeneity clustering patterns of study-specific effects across datasets.
FLARCC detects the clustering structure by penalizing the weighted $\ell_1$-norm differences of adjacent regression coefficients.
But to determine the adjacency and adaptive weights, the local replications are necessitated in FLARCC.
\cite{ma2017concave,ma2018explore} are the first few works achieving the latent model-based clusters identification without repeated observations.
More specifically, they focused on the linear regression model and added the concave fusion penalties, i.e., the SCAD \citep{fan2001variable} and MCP penalty \citep{zhang2010nearly}, to cluster the intercepts and the regression coefficients, respectively.
Thanks to the elegant properties of the concave penalties, the consistencies of model estimation and clustering are established in their works.
The success of the concave fusion penalties facilitates the model-based clustering analysis in other complex models.
In \cite{yang2019high} and \cite{yu2023fusion}, the authors studied the coefficient clustering problem in the high dimensional regression setting, by adopting the concave pairwise fusion penalty to identifying latent grouping structure, as well as a common sparsity penalty to perform variable selection. 
The authors of \cite{liu2019subgroup} considered a heterogeneous additive partially linear
model, which contains the homogeneous linear components and cluster-dependent nonlinear
components.
In their method, the B-spline method is adopted to approximate the nonlinear components, and the concave pairwise fusion penalty is applied to automatically identify the latent clusters of the spline coefficients.
In \cite{zhu2018cluster}, the authors extended the clustering problem in the non-parametric setting.
Their proposed method utilized the concave pairwise fusion penalty on the B-spline coefficients to partition the profiles of longitudinal data. \cite{wang2023clustering} considered the clustering problem in the framework of functional principal component analysis, also clustered units based on B-spline coefficients.
Note that the above-mentioned approaches treat the observations spatially independent and cannot handle data with the auxiliary network structure or underlying spatial restrictions.

In the field of spatial data analysis, the fusion penalized clustering approach also attracts a significant amount of recent research.
In \cite{wang2019spatial}, the authors extended the above concave fusion penalized regression approach to a class of spatially-weighted pairwise fusion method, referred to SaSa. 
The SaSa approach penalizes on the $\ell_2$-norm differences of the regression coefficients at each two locations, and incorporate the spatial information by adding a set of spatial weights.
However, because of the pairwise fusion penalties, the SaSa approach usually suffers the tremendous computation: to cluster $n$ spatial locations, the number of pairwise fusion penalties are $O(n^2)$.
To simplify the penalty terms, the authors of \cite{li2019spatial} proposed a tree-based fusion penalty to capture the spatial homogeneity between the coefficients.
Their penalties are defined by a Euclidean-distance-based minimum spanning tree (MST) and only penalize the parameter differences of two connected locations in the tree.
Nevertheless, this pre-determined MST usually depends on the spatial distance information and tends to over-cluster the spatial locations.
To address the over-clustering problem, \cite{zhang2019distributed} proposed an adaptive tree-based fusion penalty.
They determined an adaptive tree structure by combining the spatial information with the local model similarities. \cite{wang2023clustered} used the adaptive tree approach in a Poisson process model. 
However, to construct the adaptive tree, it requires the local repeated observations at each location, which is rare in spatial data analysis.
In this work, we propose a novel adaptive fusion penalty, named the forest Lasso. Our penalty is based on the repeated `tree' structures instead of the repeated observations.
It will be shown that our method has an advantage in the trade-off between the estimation accuracy and the computation speed.
Additionally, Most of the existing works focus on the parametric linear regression model.
Our work further extends the spatial coefficient clustering problem to a partial linear model, in which a spatially smoothing intercept is introduced to capture the spatial smoothing effect.


\section{Model and Method}\label{Section: Model}


In this section, we first propose our spatial heterogeneous additive partial linear model (SHAPLM), and then provide the estimation procedure.

Consider the spatial dataset $\{y(\s_i), \x(\s_i),\s_i\}_{i=1}^{n},$ where $\s_i \in \Rc^2$ presents the $i$th location, $\x(\s_i)\in \Rc^p$ and $y(\s_i)$ are the corresponding covariates and response variable at the location $\s_i.$ 
Suppose the $n$ locations $\{\s_i\}_{i=1}^{n}$ are from $K$ underlying spatially contiguous clusters $\{\Cc_k\}_{k=1}^K$.
In the $k$th cluster, the data are from the following partial linear model: 
\begin{align}\label{Eq: Model}
y(\s) = g(\s) + \x(\s)^\top\vbeta^{(k)} + \epsilon(\s), \text{if~} \s \in \Cc_k,
\end{align}
where $g(\cdot)\in\mathbb{G}$ is a unknown but smooth function based on the location and $\mathbb{G}$ is the functional space, $\vbeta^{(k)}\in\Rc^p$ is a $p$-dimensional linear coefficient vector, $\epsilon(\s)$ is the measurement error.
In the model (\ref{Eq: Model}), the response variable $y$ depends on the covariates $\x$ through a linear function that varies across the clusters and depends on the location $\s$ through a nonlinear function $g(\cdot)$ that is common to all the clusters.
Here the function $g(\cdot)$ is a spatial varying intercept to absorb the spatial random effect.
In practice, the partition $\{\Cc_k\}_{k=1}^K$ is unknown. 
Thus, the goal is to perform the model estimation and clustering simultaneously.

To achieve the goal, we consider to minimize the following penalized objective function:
\begin{align}\label{Eq: min_original}
\min_{g\in\mathbb{G},\vbeta \in \Rc^{np}} \frac{1}{2n}\sum_{i=1}^{n}\big(y(\s_i) - g(\s_i) - \x(\s_i)^\top\vbeta(\s_i)\big)^2 + \rho \mathcal{J}(g) + \lambda \mathcal{P}(\vbeta),
\end{align}
where $\vbeta = [\vbeta(\s_1)^{\top},\cdots,\vbeta(\s_n)^{\top}]^{\top}$, $\rho$ and $\lambda$ are two tuning parameters. 
The problem (\ref{Eq: min_original}) is a doubly penalized least square estimation problem. 
The first penalty $\mathcal{J}(\cdot)$ is to control the trade-off between fidelity to the data and roughness of the function.
The second function $\mathcal{P}(\cdot)$ is a penalty aiming to pursuit the homogeneity of two coefficient vectors if their corresponding locations are close.
There are two main challenges in solving the above minimization problem.
On the one hand, the nonlinear function $g(\cdot)$ is a bivariate function depending on the location $\s$.
In most cases, the observations are located in the spatial domains with complex boundaries and holes.
These complex spatial structures make the tensor product spline method and kernel smoothing method less desirable \citep{lai2013bivariate}.
Thus, it requires an efficient approach for estimating the bivariate nonlinear function $g(\cdot)$.
On the other hand, the design of the penalty function $\mathcal{P}(\cdot)$ is crucial, in terms of both the computation efficiency and estimation accuracy.
As discussed in Section \ref{Section: RelatedWorks}, the spatial pairwise penalty function \citep{wang2019spatial} could achieve accurate results but suffers the $O(n^2)$ computational cost, while the naive minimum spanning tree-based penalty function \citep{li2019spatial} enjoys the fast computation but the estimation results are sensitive to the observed locations.
This motivates us to develop a new fusion approach, which can find a balance between computation and accuracy.
In the following, we provide our approaches to address these two challenges.

\subsection{Bivariate Spline Approximation Over Triangulation}

We first approximate the spatially smooth intercept function $g(\cdot)$ by the method of bivariate penalized spline over triangulation (BPST) \citep{lai2013bivariate}.
A triangulation of a spatial domain $\Omega$ is a set of $M$ triangles $\Delta = \{\tau_1,\cdots,\tau_M\}$ if $\Omega = \cup_{m=1}^{M}\tau_m$ and any two triangles share a vertex or an edge at most.
Given a triangle $\tau \in \Delta$ and an arbitrary point $\s \in \Rc^2,$ let $b_1,$ $b_2$ and $b_3$ be the barycentric coordinates of $\s$ relative to $\tau.$
The Bernstein basis polynomials of degree $d \ge 1$ relative to triangle $\tau$ is defined as $B_{ijk}^{\tau,d}(\s) = (i!j!k!)^{-1}d!b_1^ib_2^jb_3^k,$ with $i+j+k=d.$
Denote $\mathbb{P}_d(\tau)$ as the space of all polynomials of degree less than or equal to $d$ on $\tau.$
For any integer $r\ge 0,$ let $\mathbb{C}^r(\Omega)$ be the collection of all $r$th continuously differentiable functions on over $\Omega.$
Denote $\mathbb{S}_d^r(\Delta) = \{\zeta\in\mathbb{C}^r(\Omega): \zeta|\tau \in \mathbb{P}_d(\tau),\tau\in\Delta\}$ as the spline space of degrees $d$ and smoothness {$r$} over triangulation $\Delta,$ where $\zeta|\tau$ is the polynomial piece of spline $\zeta$ restricted on triangle $\tau.$
Let $\{B_m\}_{m\in\Mcc}$ be the set of bivariate Bernstein basis polynomials for ${\mathbb{S}_{d}^{r}(\Delta)}$, where $\Mcc$ is an index set of Bernstein basis polynomials.
Then any function $\zeta \in \mathbb{S}_d^r(\tau)$ can be represented with the following basis expansion:
$\zeta(\s) = \sum_{m\in\Mcc} B_m(\s)\alpha_m = \B(\s)^\top\Valpha,$
where $\Valpha = [\alpha_m, m\in \Mcc]^\top$ is the spline coefficient vector, and $\B(\s) = [B_m(\s),m\in \Mcc]^\top$ is the evaluation vector of Bernstein basis polynomials at location $\s.$ 
The above bivariate spline basis can be easily generated via the R package \texttt{BPST}.

Adopting the BPST method, we can approximate the nonlinear function $g(\cdot)$ with a function $\gp(\cdot)\in\mathbb{S}_{d}^{r}(\Delta).$
Thus, we consider the following objective function:
\begin{align}\label{Eq: min_spline}
\min_{\gp\in\mathbb{S}_{d}^{r}(\Delta),\vbeta \in \Rc^{np}} 
\frac{1}{2n}\sum_{i=1}^{n}\big(y(\s_i) - \gp(\s_i) - \x(\s_i)^\top\vbeta(\s_i)\big)^2 + \rho \mathcal{J}(\gp) + \lambda \mathcal{P}(\vbeta),
\end{align}
where $\mathcal{J}(\gp) = \sum_{\tau\in\Delta} \int_\tau \sum_{i+j=2} \binom{2}{i}(\nabla_{s_1}^i\nabla_{s_2}^j \gp)^2dz_1dz_2$ is the roughness penalty and $\nabla_{s_h}^{v}\gp(\s)$ is the $v$th order derivative in the direction $\s_h,$ $h=1,2$.
With the bivariate spline representation $\gp(\s) = \B(\s)^\top\Valpha,$ we rewrite the objective function as:
\begin{align}\label{Eq: min_spline2}
\min_{\Valpha \in \Rc^{|\Mcc|}, \vbeta \in \Rc^{np}} 
\frac{1}{2n}\sum_{i=1}^{n}\big(y(\s_i)\! - \!\B(\s_i)^\top\Valpha \!-\! \x(\s_i)^\top\vbeta(\s_i)\big)^2\! +\! \rho \Valpha^\top \P \Valpha \!+\! \lambda \mathcal{P}(\vbeta),\text{s.t.}~ \K\Valpha = \0,
\end{align}
where $\P$ is the penalty matrix satisfying $\Valpha^\top \P \Valpha = \mathcal{J}(\gp),$ 
and $\K\Valpha = \0$ is the smoothness constraint of the bivariate splines across all the shared edges of triangles.

Then we can remove the constraint via a reparameterization on the spline coefficient $\Valpha$. Consider the QR decomposition of $\K^\top$: $\K^\top = \Q\R=
\begin{pmatrix}
\Q_1,\Q_2
\end{pmatrix}
\begin{pmatrix}
\R_1\\
\R_2
\end{pmatrix},$ where $\Q$ is orthogonal, $\R$ is upper triangular, the submatrix $\Q_1$ is the first $r$ columns of $\Q$, where $r$ is the rank of matrix $\K$, and $\R_2$ is a matrix of zeros. 
We reparametrize using $\Valpha = \Q_2\vpsi$ for some $\vpsi,$ then the smoothness constraint $\K\Valpha = \0$ is satisfied. 
Thus, the above problem can be rewritten as an unrestricted penalized minimization problem:
\begin{align}\label{Eq: min_spline3}
\min_{\vpsi,\vbeta} 
\frac{1}{2n}\sum_{i=1}^{n}\big(y(\s_i) - \Bt(\s_i)^\top\vpsi - \x(\s_i)^\top\vbeta(\s_i)\big)^2 + \rho \vpsi^\top \D \vpsi + \lambda \mathcal{P}(\vbeta),
\end{align}
where $\Bt(\s_i) = \Q_2^\top\B(\s_i)$, $\D = \Q_2^\top\P\Q_2$ and $\vbeta = [\vbeta(\s_1)^{\top},\cdots,\vbeta(\s_n)^{\top}]^{\top}.$
Let $\hat{\vbeta}$ and $\hat{\vpsi}$ be the minimizer of (\ref{Eq: min_spline3}). 
Then the estimator of $\Valpha$ and $g(\cdot)$ are $\hat{\Valpha} = \Q_2\hat{\vpsi}$ and $\hat{g}(\s) = \B(\s)^\top \hat{\Valpha}.$

\subsection{Forest Lasso Penalty}

In the following, we propose our design of the clustering penalty function $\mathcal{P}(\cdot).$
The modern spatial datasets obtained from remote sensing technologies usually contain thousands of spatial locations.
The massive volume of the data makes the spatial pairwise penalty function \citep{wang2019spatial} computationally infeasible.
Thus, we follow the idea of the tree-based Lasso penalty function \citep{li2019spatial,zhang2019distributed} to maintain the computation complexity.

Consider an undirected connected graph $\Gc = (\Vc,\Ec),$ where $\Vc = \{v_1,\cdots,v_n\}$ is the set of vertices with $v_i$ representing location $\s_i,$ and $\Ec = \{(v_i,v_j):v_i\neq v_j\}$ is the edge set.  
A spanning tree $\Tc$ of the graph $\Gc$ is a connected undirected subgraph of $\Gc$ with no cycles and includes all the vertices of $\Gc.$ 
With the $p$ specific spanning trees $\{\Tc_k\}_{k=1}^{p}$ corresponding to the $p$ covariates, the tree based Lasso penalty can be defined as:
\begin{align}
\Pc(\vbeta) = \sum_{k=1}^{p} \sum_{(\s_i,\s_j) \in \Tc_k} |[\vbeta(\s_i)]_k-[\vbeta(\s_i)]_k| = \sum_{k=1}^{p}\|\H_k\vbeta_k\|_1,
\end{align}
where $[\vbeta(\s_i)]_k$ presents the $k$th element of the vector $\vbeta(\s_i)$ and $(\s_i,\s_j) \in \Tc_k$ means there exists an edge in $\Tc_k$ connecting $\s_i$ and $\s_j,$ and $\H_k$ is the adjacency matrix of $\Tc_k.$

Note that there exists no circle in the spanning tree. 
Thus, for the tree $\Tc_k,$ its adjacency matrix $\H_k$ is full row rank with size $(n-1)\times n$. 
Adding one more row to $\H_k$, we can form a square and full rank matrix:
$\widetilde{\H}_k=\begin{bmatrix}
   \H_k\\
   \frac{1}{\sqrt{n}}\1^\top
\end{bmatrix},$ as in \cite{li2019spatial}.
Define $\vtheta = \widetilde{\H} \vbeta$ with $\widetilde{\H} = \Diag[\widetilde{\H}_1,\cdots,\widetilde{\H}_p].$
Then the estimation problem (\ref{Eq: min_original}) reduces to a lasso-type problem:
\begin{align}\label{Eq: min_clustering2}
\min_{g\in\mathbb{G},\vtheta \in \Rc^{nd}} \frac{1}{2n}\|\y - \g - \X \widetilde{\H}^{-1} \vtheta\|_2^2 + \rho \mathcal{J}(g) + \lambda \sum_{l \in \Ac} |[\vtheta]_l|,
\end{align}
where $\y=[y(\s_1),\cdots,y(\s_n)]^\top,$ $\g=[g(\s_1),\cdots,g(\s_n)]^\top,$ $\X=\Diag[\x(\s_1)^\top,\cdots,\x(\s_n)^\top]$, $\Ac$ presents the index set $\Ac = \{l: \text{mod}(l,n) \neq 0, \text{for~} l = 1, \cdots, nd \}$ and $[\vtheta]_l$ is the $l$th element of $\vtheta.$
With the BPST approximation, we have the estimation problem as:
\begin{align}\label{Eq: min_final}
\min_{\vpsi,\vtheta} 
\frac{1}{2n}\|\y - \Bt\vpsi - \widetilde{\X} \vtheta\|_2^2 + \rho \vpsi^\top \D \vpsi + \lambda \sum_{l \in \Ac} |[\vtheta]_l|,
\end{align}
where $\Bt = [\Bt(\s_1)^\top,\cdots, \Bt(\s_n)^\top]$ and $\widetilde{\X} = \X\widetilde{\H}^{-1}.$
The double penalized problem (\ref{Eq: min_final}) can be minimized with the R package \texttt{gelnet}.
By solving (\ref{Eq: min_final}), we can obtain the estimation of $\vtheta$ as $\hat{\vtheta}.$ Then, the estimator of $\vbeta$ is $\hat{\vbeta} = (\widetilde{\H})^{-1} \hat{\vtheta}.$ 

The model estimation from (\ref{Eq: min_final}) highly depends on the pre-selected tree sets $\{\Tc_k\}_{k=1}^{p}.$
In this work, we aim to improve estimation efficiency by exploiting various spanning tree structures.
Consider $Q$ trials, and for the $q$th trial, a set of spanning trees $\{\Tc_k^{(q)}\}_{k=1}^{p}$ is randomly generated.
The estimators $\hat{\vbeta}^{(q)}$ can be obtained by solving the corresponding estimation problem (\ref{Eq: min_final}).
After the $Q$ trials of estimations, the average of $\{\hat{\vbeta}^{(q)}\}_{q=1}^{Q}$ is calculated as $\bar{\vbeta} = \frac{1}{Q}  \sum_{q=1}^Q\hat{\vbeta}^{(q)}.$
Then for any two locations $\s_i$ and $\s_j,$ define the adaptive weight as $\w_{k}(\s_i,\s_j) = |[\bar{\vbeta}(s_i)]_k - [\bar{\vbeta}(\s_j)]_k|,$ $\forall k.$
A set of the minimum spanning trees $\{\Tc_k^*\}_{k=1}^{p}$ can be constructed with the adaptive weights, where $\Tc_k^*$ is for the $k$th covariate and based on $\{\w_k(\s_i,\s_j):i,j=1,\cdots,n\}$.
In the end, to obtain the final estimation $\hat{\g}^*$ and $\hat{\vbeta}^*$, we adopt the adaptive lasso penalty \citep{zou2006adaptive,huang2008adaptive}:
\begin{align}
\min_{\vpsi,\vbeta} 
\frac{1}{2n}\sum_{i=1}^{n}\big(y(\s_i) - \Bt(\s_i)^\top\vpsi - \x(\s_i)^\top&\vbeta(\s_i)\big)^2 + \rho \vpsi^\top \D \vpsi \notag\\
&+ \lambda  \sum_{k=1}^{p} \sum_{(\s_i,\s_j) \in \Tc_k^*} \frac{|[\vbeta(\s_i)]_k-[\vbeta(\s_i)]_k|}{\w_{k}(\s_i,\s_j)}.
\end{align}
Similarly, the above estimation problem can be rewritten in a matrix-vector form:
\begin{align}\label{Eq: Final_Estimator}
(\hat{\vpsi}^*,\hat{\vtheta}^*) = \arg \min_{\vpsi,\vtheta} 
\frac{1}{2n}\|\y - \Bt\vpsi - \widetilde{\X}^* \vtheta\|_2^2 + \rho \vpsi^\top \D \vpsi + \lambda \sum_{l \in \Ac} |[\vtheta]_l|/[\w]_l,
\end{align}
where $\widetilde{\X}^* = \X\widetilde{\H}^{*-1}$, $\widetilde{\H}^* = \Diag[\widetilde{\H}_1^*,\cdots,\widetilde{\H}_d^*],$ $\widetilde{\H}_k^*=\begin{bmatrix}
   \H_k^*\\
   \frac{1}{\sqrt{n}}\1^\top
\end{bmatrix}$ is the adjacency matrix corresponding to $\Tc^*_k$, and $\w = |\widetilde{\H}^{*}\bar{\vbeta}|$ is the adaptive weight vector.
With the estimated coefficient $(\hat{\vpsi}^*,\hat{\vtheta}^*)$, the estimated non-parametric function $\hat{g}(\s)^* = \widetilde{\B}(\s)\hat{\vpsi}^*$ and clustered linear coefficient is $\hat{\vbeta}=(\widetilde{\H}^*)^{-1}\hat{\vtheta}^*.$
Furthermore, $\s_i$ and $\s_j$ are treated from the same cluster if $\hat{\vbeta}^*(\s_i) = \hat{\vbeta}^*(\s_j).$
We name this method as forest Lasso penalty because multiple trees are adopted in our estimation procedure.
In Figure \ref{Fig: Forest Lasso Demo}, we visualize our forest Lasso approach with the scenario of one covariate.

\begin{figure}[t!]
\begin{center}
\begin{tabular}{@{}c@{}}
\includegraphics[width=1\textwidth]{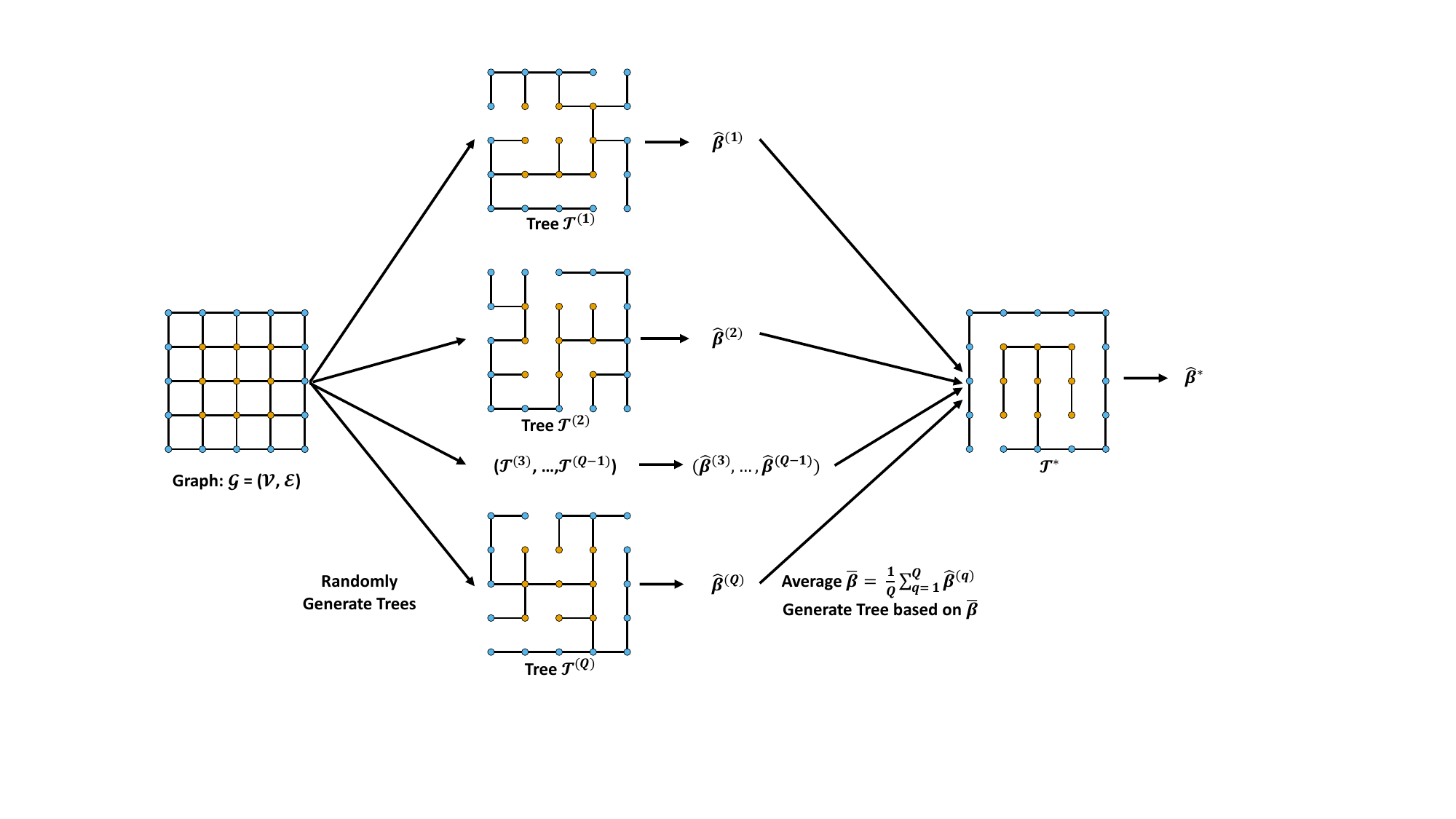}
\end{tabular}
\caption{Estimation procedure of the forest lasso. In this example, we consider a simple case with $\vbeta \in \Rc^1$ and two clusters distinguished by nodes' color (yellow and blue). } \label{Fig: Forest Lasso Demo}
\end{center}
\end{figure}

\subsection{Tuning Parameters Selection}

To select the tuning parameters $\rho$ and $\lambda$, we consider a two-step procedure to search from a sequence of grid points.
The idea is similar to the one proposed in \cite{zhu2018cluster}.
Recall that the parameter $\rho$ adjusts the smoothness of the BPST approximation, and $\lambda$ controls the homogeneity of the coefficients.
In general, we can implement a grid search for both the tuning parameters simultaneously. 
However, it leads to an intensive computation.
To address this problem, we propose a two-step procedure in which we first search for an optimal value of $\lambda$ under $\rho = 0$, then select $\rho$ given the optimal $\lambda$ from the first step. 
Specifically, we select $\lambda$ by minimizing the modified Bayesian information criterion \citep{wang2009shrinkage}
\begin{align}
\text{mBIC}_\lambda = \log\Big(\frac{1}{n}\sum_{i=1}^{n}\big(y(\s_i) - \hat{y}(\s_i)\big)^2\Big) + \log(\log(nd))\times \frac{\log(n)}{n}\times \text{df}(\hat{\vtheta}),
\end{align}
and $\rho$ is selected by minimizing the classical Bayesian information criterion
\begin{align}
\text{BIC}_\rho = \log\Big(\frac{1}{n}\sum_{i=1}^{n}\big(y(\s_i) - \hat{y}(\s_i)\big)^2\Big) + \frac{\log(n)}{n}\times \text{df}(\hat{\vpsi}),
\end{align}
where $\text{df}(\hat{\vtheta})$ is the total number of non-zero elements of $\hat{\vtheta}$ and $\text{df}(\hat{\vpsi}) = \tr[\Bt(\Bt^\top\Bt +\rho \D)^{-1}\Bt^\top].$


\section{Theoretical Properties}\label{Section: analysis}

In this section, we study the theoretical properties of the proposed doubly penalized least square estimator. 
Due to the high dimensionality of the design matrix $\widetilde{\X}^* \in \mathbb{R}^{n\times np}$, there is no closed-form solution to the minimization problem (\ref{Eq: Final_Estimator}).
Thus, we study the properties of our estimator $(\hat{g}^*,\widehat{\vtheta}^*)$ by introducing oracle estimator, which is defined as the followings.
Without loss of generality, we consider that only the first $p^\prime$ elements of $\vtheta$ are nonzero and the rest are all zeros, where $\vtheta$ is the true parameter vector defined by the adaptive tree. 
More specifically, we define non-zero index set as $\mathcal{N} = \{1, \ldots, p^\prime\}$ and zero index set as $\mathcal{N}^c = \{p^\prime+1, \ldots, np\}$. 
Denote $\widetilde{\X} = (\widetilde{\X}_{\mathcal{N}}, \widetilde{\X}_{\mathcal{N}^c})$ and $\widetilde{\X}_{\mathcal{N}}$ be the first $p^\prime$ columns of matrix $\widetilde{\X}$. 
Then the oracle estimator is defined as
\[
(\widehat{\bs{\psi}}^o, \widehat{\bs{\theta}}^o) =  \arg\min_{\vpsi^o,\vtheta^o} \frac{1}{2n}\|\y - \Bt\vpsi^o - \widetilde{\X}_{\mathcal{N}} \vtheta^o\|_2^2 + \rho \vpsi^{o \top} \D \vpsi^o.
\]
Define the oracle design matrix $\mathbf{E} = 
\begin{bmatrix}
\Bt, \widetilde{\X}_{\mathcal{N}}
\end{bmatrix}$ and the block diagonal penalty matrix $\widetilde{\D} = \begin{bmatrix}
   \D & \0\\
   \0 & \0
\end{bmatrix}$. 
It can be easily obtained that the oracle estimator $(\widehat{\vpsi}^{o\top}, \widehat{\vtheta}^{o\top})^{\top} = (\mathbf{E}^\top \mathbf{E} + n \rho \widetilde{\D})^{-1} \mathbf{E}^\top \y$, and $\hat{g}^{o}(\s) = \widetilde{\B}(\s)\widehat{\vpsi}^{o\top}$.
The following lemma gives the error bounds for the oracle estimator.

\begin{lem}\label{LEM:oracle}
Under Assumptions (A1) -- (A6) in the Appendix A, the oracle estimator satisfies the following properties, 
\begin{align}
&\| \widehat{\vtheta}^o - [\vtheta]_{\mathcal{N}}\|  = O_P\{|\triangle|^{(d+1)} + n^{-1/2} |\triangle|^{-1} + \rho |\triangle|^{-4}\},\\
&\|\hat{g}^o - g\|_{L_2} = O_P\{|\triangle|^{(d+1)} + n^{-1/2} |\triangle|^{-1} + \rho |\triangle|^{-4}\},
\end{align}
where $|\triangle|:=\max \{|\tau|,\tau \in \triangle \}$ is the size of $\triangle$, and $\|f\|_{L_2}=\int_{\s\in \Omega} f(\s)^2d(\s)$ for any function $f$.
\end{lem}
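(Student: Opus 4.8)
The plan is to benchmark the oracle estimator against a ``pseudo-true'' parameter built from the best spline approximation of $g$, and then to control bias and stochastic fluctuation term by term. By the approximation power of bivariate splines over a quasi-uniform triangulation (Assumptions (A1)--(A2)), there is a spline $g^*\in\mathbb{S}_d^r(\Delta)$, written in the reparametrized basis as $g^*(\s)=\Bt(\s)^\top\vpsi^*$, with $\|g-g^*\|_\infty=O(|\triangle|^{d+1})$ and roughness $\mathcal{J}(g^*)=O(1)$. In matrix form the true model reads $\y=\Bt\vpsi^*+\widetilde{\X}_{\mathcal{N}}[\vtheta]_{\mathcal{N}}+\r+\vepsilon$ with remainder $\r=\g-\Bt\vpsi^*$, $\|\r\|\le n^{1/2}\|g-g^*\|_\infty=O(n^{1/2}|\triangle|^{d+1})$, where I used $\X\vbeta=\widetilde{\X}\vtheta=\widetilde{\X}_{\mathcal{N}}[\vtheta]_{\mathcal{N}}$. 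Setting $\bs{\gamma}^*=(\vpsi^{*\top},[\vtheta]_{\mathcal{N}}^\top)^\top$ and substituting into the closed form of the oracle estimator yields the decomposition
\begin{align*}
\begin{pmatrix}\widehat{\vpsi}^o-\vpsi^*\\ \widehat{\vtheta}^o-[\vtheta]_{\mathcal{N}}\end{pmatrix}
=-\,n\rho\,(\mathbf{E}^\top\mathbf{E}+n\rho\widetilde{\D})^{-1}\widetilde{\D}\,\bs{\gamma}^*
+(\mathbf{E}^\top\mathbf{E}+n\rho\widetilde{\D})^{-1}\mathbf{E}^\top\r
+(\mathbf{E}^\top\mathbf{E}+n\rho\widetilde{\D})^{-1}\mathbf{E}^\top\vepsilon,
\end{align*}
namely a \emph{penalization bias}, a \emph{spline-approximation bias}, and a \emph{stochastic term}; the analogous identity for $\hat g^o$ follows from $\hat g^o(\s)=\Bt(\s)^\top\widehat{\vpsi}^o$.

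For Step 2 I collect the deterministic matrix facts furnished by (A1) and (A3)--(A6): (i) $\lambda_{\min}(n^{-1}\Bt^\top\Bt)\asymp\lambda_{\max}(n^{-1}\Bt^\top\Bt)\asymp|\triangle|^2$ and $|\Mcc|\asymp|\triangle|^{-2}$, so $\Tr(\Bt^\top\Bt)=O(n)$, and the $L_2$ mass matrix $\mathbf{M}=\int_\Omega\Bt(\s)\Bt(\s)^\top\,d\s$ has $\lambda_{\max}(\mathbf{M})=O(|\triangle|^2)$; (ii) the restricted Gram matrix $n^{-1}\widetilde{\X}_{\mathcal{N}}^\top\widetilde{\X}_{\mathcal{N}}$, as well as the ``partialled-out'' matrices $n^{-1}\widetilde{\X}_{\mathcal{N}}^\top(\I-\mathbf{Q}_\Bt)\widetilde{\X}_{\mathcal{N}}$ and $n^{-1}|\triangle|^{-2}\Bt^\top(\I-\mathbf{Q}_{\widetilde{\X}_{\mathcal{N}}})\Bt$ (with $\mathbf{Q}_{(\cdot)}$ the orthogonal projector onto the indicated column space), have eigenvalues bounded away from $0$ and $\infty$, and $\Tr(\widetilde{\X}_{\mathcal{N}}^\top\widetilde{\X}_{\mathcal{N}})=O(n)$; (iii) $\|\D\|=O(|\triangle|^{-2})$ while $\vpsi^{*\top}\D\vpsi^*=\mathcal{J}(g^*)=O(1)$ and $\|\vpsi^*\|=O(|\triangle|^{-1})$, hence $\|\D\vpsi^*\|=O(|\triangle|^{-3})$; (iv) $\|\widetilde{\X}_{\mathcal{N}}\|=O(n^{1/2})$ and $\|\Bt\|=O(n^{1/2}|\triangle|)$, so $\|\Bt^\top\widetilde{\X}_{\mathcal{N}}\|=O(n|\triangle|)$. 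I then write $\mathbf{E}^\top\mathbf{E}+n\rho\widetilde{\D}$ in $2\times2$ block form with diagonal blocks $A=\Bt^\top\Bt+n\rho\D$ and $C=\widetilde{\X}_{\mathcal{N}}^\top\widetilde{\X}_{\mathcal{N}}$ and off-diagonal block $B=\Bt^\top\widetilde{\X}_{\mathcal{N}}$, and extract the $\vpsi$- and $\vtheta$-components of each of the three terms via the Schur-complement inversion formula; the $\vpsi$-block is then governed by $(A-BC^{-1}B^\top)^{-1}$, of norm $O(n^{-1}|\triangle|^{-2})$, and the $\vtheta$-block by $(C-B^\top A^{-1}B)^{-1}$, of norm $O(n^{-1})$, coupled through $B$.

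Combining (i)--(iv) in Step 3: the penalization bias contributes $O(\rho|\triangle|^{-5})$ to the raw $\vpsi$-block, since $n\rho\|(A-BC^{-1}B^\top)^{-1}\D\vpsi^*\|\le C n\rho\cdot n^{-1}|\triangle|^{-2}\cdot|\triangle|^{-3}$, and $O(\rho|\triangle|^{-4})$ to the $\vtheta$-block through the cross term $n\rho\,(C-B^\top A^{-1}B)^{-1}B^\top A^{-1}\D\vpsi^*$, of norm $O(n\rho\cdot n^{-1}\cdot n|\triangle|\cdot n^{-1}|\triangle|^{-2}\cdot|\triangle|^{-3})$. The approximation bias, using $\|\Bt^\top\r\|\le\|\Bt\|\,\|\r\|=O(n|\triangle|^{d+2})$, $\|\widetilde{\X}_{\mathcal{N}}^\top\r\|=O(n|\triangle|^{d+1})$ and the block inversion, contributes $O(|\triangle|^d)$ to the $\vpsi$-block and $O(|\triangle|^{d+1})$ to the $\vtheta$-block. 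For the stochastic term, $\mathbb{E}\|\Bt^\top\vepsilon\|^2=\sigma^2\Tr(\Bt^\top\Bt)=O(n)$ and $\mathbb{E}\|\widetilde{\X}_{\mathcal{N}}^\top\vepsilon\|^2=O(n)$, so Markov's inequality gives a $\vpsi$-block of order $O_P(n^{-1/2}|\triangle|^{-2})$ and a $\vtheta$-block of order $O_P(n^{-1/2}|\triangle|^{-1})$ (the $|\triangle|^{-1}$ entering the $\vtheta$-block only via $(C-B^\top A^{-1}B)^{-1}B^\top A^{-1}\Bt^\top\vepsilon$). Summing the three contributions gives the stated bound for $\widehat{\vtheta}^o-[\vtheta]_{\mathcal{N}}$. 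Finally, $\|\hat g^o-g\|_{L_2}\le\|\Bt^\top(\widehat{\vpsi}^o-\vpsi^*)\|_{L_2}+\|g^*-g\|_{L_2}\le\lambda_{\max}(\mathbf{M})^{1/2}\|\widehat{\vpsi}^o-\vpsi^*\|+O(|\triangle|^{d+1})$, and the extra factor $O(|\triangle|)$ converts the $\vpsi$-block rates $\rho|\triangle|^{-5}$, $|\triangle|^{d}$, $n^{-1/2}|\triangle|^{-2}$ into $\rho|\triangle|^{-4}$, $|\triangle|^{d+1}$, $n^{-1/2}|\triangle|^{-1}$, which is the asserted bound.

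The delicate part is Step 2. Because $\widetilde{\X}=\X\widetilde{\H}^{-1}$ involves the inverse of the generally ill-conditioned augmented tree-incidence matrix $\widetilde{\H}$, one cannot treat $n^{-1}\widetilde{\X}^\top\widetilde{\X}$ as well-conditioned in general; the argument must exploit that only the oracle-active columns $\widetilde{\X}_{\mathcal{N}}$ enter $\mathbf{E}$, together with the restricted-eigenvalue Assumption, and it must propagate the \emph{mixed} $|\triangle|$-scaling (the spline block of $\mathbf{E}^\top\mathbf{E}$ is of order $n|\triangle|^2$, the linear block of order $n$) through the Schur complement without losing powers of $|\triangle|$. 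In particular, establishing the cross-block bound $\|\Bt^\top\widetilde{\X}_{\mathcal{N}}\|=O(n|\triangle|)$ and making the penalization-bias cross term land exactly at $\rho|\triangle|^{-4}$ rather than a cruder $\rho|\triangle|^{-5}$ is the estimate demanding the most care; once (i)--(iv) are in hand, the remainder is a routine, if lengthy, assembly of operator-norm and trace bounds.
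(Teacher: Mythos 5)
Your proposal is correct in its conclusion and shares the paper's high-level strategy --- replace $g$ by its best spline approximant $g^*$ (Lemma \ref{LEM:apporBi}), write the ridge-type closed form of the oracle estimator, and decompose the error into a penalization bias, a spline-approximation bias, and a stochastic term --- but the execution of the Gram-matrix control is genuinely different. The paper works throughout with the \emph{standardized} Bernstein basis $B_m^{\ast}=B_m/\|B_m\|$; by the stability of the Bernstein basis (Lemma \ref{LEM:normequity}) this makes the joint matrix $\bs{\Gamma}_{n,\rho}=n^{-1}(\mathbf{E}^{\top}\mathbf{E}+2n\rho\widetilde{\D})$ have eigenvalues bounded above and below by absolute constants (Lemma \ref{LEM:GammaInverse}), so the whole bound is a one-line consequence of $\|\bs{\Gamma}_{n,\rho}^{-1}\|=O(1)$ together with $\|\mathbf{E}^{\top}\bs{\delta}\|=O_P(n|\triangle|^{d+1})$, $\|\mathbf{E}^{\top}\bs{\epsilon}\|=O_P(n^{1/2}|\triangle|^{-1})$, and $n\rho\|\widetilde{\D}(\bs{\psi}^{\ast\top},\bs{\theta}_{\mathcal{N}}^{\top})^{\top}\|=O(n\rho|\triangle|^{-4})$; no Schur complement is needed, and the $L_2$ bound on $\hat g^o-g$ again follows from the norm equivalence of Lemma \ref{LEM:normequity}. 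You instead keep the unnormalized scaling ($n^{-1}\Bt^{\top}\Bt\asymp|\triangle|^{2}\I$, $\lambda_{\max}(\mathbf{M})=O(|\triangle|^{2})$) and compensate with a two-block Schur inversion that propagates the mixed $|\triangle|$-scalings; your accounting is internally consistent and lands on the same rates because the extra $|\triangle|^{-1}$ in the coefficient-block error is exactly absorbed by the $|\triangle|$ from the mass matrix when passing to the function. The price of your route is that the facts in your Step 2 --- in particular the well-conditioning of the partialled-out matrices $n^{-1}\widetilde{\X}_{\mathcal{N}}^{\top}(\I-\mathbf{Q}_{\Bt})\widetilde{\X}_{\mathcal{N}}$ and the cross-block bound $\|\Bt^{\top}\widetilde{\X}_{\mathcal{N}}\|=O(n|\triangle|)$ --- are asserted rather than derived; in the paper these are subsumed into the single eigenvalue bound of Lemma \ref{LEM:GammaInverse}, which is where Assumptions (A2)--(A4) actually enter. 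Two small slips: the approximation step needs (A1) and the quasi-uniformity (A4), not (A2); and the variance computation $\mathbb{E}\|\mathbf{E}^{\top}\bs{\epsilon}\|^{2}\propto\Tr(\mathbf{E}\mathbf{E}^{\top})$ relies on the sub-Gaussian tail condition (A5), which you should cite explicitly.
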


Lemma \ref{LEM:oracle} shows that the estimation errors of the oracle estimator depend on the triangle size $|\triangle|$, data size $n$ and penalty parameter $\rho$.
By properly choosing the size with $|\triangle|^{-1} = o(\sqrt{n})$ and penalty parameter with $\rho = o(|\triangle|^{4})$, we have the consistent estimator with $n \to \infty$.
Next, we show in Theorem \ref{THE:consistency} that under some mild conditions, our proposed estimator is also consistent. 
The key idea of the analysis is that the oracle estimator satisfies the KKT condition of the problem (\ref{Eq: Final_Estimator}) almost surely. 
Then our proposed estimator is the same as the oracle one and has consistency.

\begin{thm}\label{THE:consistency}
Under Assumptions (A1) -- (A7) in the Appendix A, our proposed estimator has the following error bounds,
\begin{align}
&\| \widehat{\vtheta}^* - \vtheta \| = O_P\{|\triangle|^{(d+1)} + n^{-1/2} |\triangle|^{-1} + \rho |\triangle|^{-4} + \lambda |\mathcal{N}\cap \mathcal{A}|^{1/2}\},\label{Eq: them1_1}\\
&\|\hat{g}^* - g\|_{L_2} = O_P\{|\triangle|^{(d+1)} + n^{-1/2} |\triangle|^{-1} + \rho |\triangle|^{-4} + \lambda |\mathcal{N}\cap \mathcal{A}|^{1/2}\} \label{Eq: them1_2}.
\end{align}
Furthermore, our estimator $\widehat{\vtheta}^*$ has the selection consistency that, as $n \to \infty$,
\begin{align}
\mathbb{P}\{ \widehat{\mathcal{N}} = {\mathcal{N}} \} \to  1,
\end{align}
where $ \widehat{\mathcal{N}} = \{i, [\widehat{\vtheta}^*]_i \neq 0\}$ is the estimated non-zero index set of $\widehat{\vtheta}^*$. 
\end{thm}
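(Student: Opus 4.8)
The plan is to prove Theorem \ref{THE:consistency} by the standard ``oracle equals estimator'' argument, which reduces everything to Lemma \ref{LEM:oracle} plus a verification of the KKT conditions. Concretely, I would proceed in three stages. \emph{Stage 1: reduce the error bounds to the oracle.} Note that the oracle estimator $(\widehat{\vpsi}^o,\widehat{\vtheta}^o)$ already satisfies $\|\widehat{\vtheta}^o-[\vtheta]_{\mathcal N}\| = O_P(r_n)$ with $r_n := |\triangle|^{d+1}+n^{-1/2}|\triangle|^{-1}+\rho|\triangle|^{-4}$ by Lemma \ref{LEM:oracle}. I would then show that, with probability tending to $1$, the vector $(\widehat{\vpsi}^o,\widehat{\vtheta}^o)$ padded with zeros on $\mathcal N^c$ is \emph{the} minimizer of the full adaptive-lasso problem (\ref{Eq: Final_Estimator}). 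Since the padded oracle has zero on $\mathcal N^c$, and since the adaptive-lasso subgradient on the active coordinates contributes an extra term of order $\lambda$ times the number of active penalized coordinates raised to the $1/2$ (this is where $|\mathcal N\cap\mathcal A|^{1/2}$ enters, via $\|\cdot\|_1 \le |\mathcal N\cap\mathcal A|^{1/2}\|\cdot\|_2$ on that block together with the strong convexity in $\rho\widetilde{\D}$ / the restricted eigenvalue of $\E^\top\E/n$), the distance between the padded oracle and the genuine minimizer on $\mathcal N$ is $O_P(\lambda|\mathcal N\cap\mathcal A|^{1/2})$. Combining with the triangle inequality gives (\ref{Eq: them1_1}); then (\ref{Eq: them1_2}) follows because $\hat g^*(\s) = \widetilde{\B}(\s)\widehat{\vpsi}^*$ is a linear image of $\widehat{\vpsi}^*$ and $\|\widehat{\vpsi}^*-\widehat{\vpsi}^o\|$ is controlled by the same quantity (the $L_2$ norm of $\widetilde{\B}(\cdot)^\top(\cdot)$ is equivalent to the Euclidean norm of the coefficient up to constants depending on the triangulation shape, by Assumption (A?) on the spline basis).

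\emph{Stage 2: verify the KKT condition on the inactive block.} The minimizer $\widehat{\vtheta}^*$ of (\ref{Eq: Final_Estimator}) is characterized by: on $\widehat{\mathcal N}$, the gradient of the quadratic-plus-ridge part equals $-\lambda\,\mathrm{sign}([\vtheta]_l)/[\w]_l$; on coordinates in $\mathcal A\setminus\widehat{\mathcal N}$, the absolute gradient is $\le \lambda/[\w]_l$; on coordinates not in $\mathcal A$ there is no penalty. To show the padded oracle satisfies these, the active-set equation holds automatically by definition of the oracle (it solves the unpenalized-on-$\mathcal N$ problem up to the $\lambda$-perturbation, which we have already absorbed), so the real work is the strict inequality on $\mathcal N^c\cap\mathcal A$. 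There I would write the gradient at the oracle as $\widetilde{\X}_{\mathcal N^c}^\top(\y-\E(\widehat{\vpsi}^o,\widehat{\vtheta}^o)^\top)/n$, decompose $\y = g + \X\widetilde{\H}^{*-1}\vtheta + \bepi$ into signal (spline-approximable intercept plus true sparse $\vtheta$) and noise, and bound: (i) the approximation-bias part by $O_P(r_n)$ times an operator-norm bound on $\widetilde{\X}_{\mathcal N^c}$ (Assumption (A?) on the boundedness of the covariates and on $\|\widetilde{\H}^{*-1}\|$, which follows from $\widetilde{\H}^*$ being full rank with controlled condition number — a point inherited from \cite{li2019spatial}); (ii) the stochastic part by a maximal inequality / sub-Gaussian tail over the $O(np)$ inactive coordinates, giving $O_P(\sqrt{\log(np)/n})$. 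Against these, the penalty level on an inactive coordinate is $\lambda/[\w]_l$, and here the adaptive weight $[\w]_l = |[\widetilde{\H}^*\bar{\vbeta}]_l|$ is, on a truly-fused edge, $O_P$ of the estimation error of $\bar{\vbeta}$ (which by the $Q$-fold averaging and Lemma \ref{LEM:oracle}-type control is small), so $\lambda/[\w]_l \to \infty$ at a rate dominating the gradient bound. This is Assumption (A7)'s role: it calibrates $\lambda$, $n$, $Q$, $|\triangle|$ so that $\lambda/\max_{l\in\mathcal N^c\cap\mathcal A}[\w]_l$ beats $\sqrt{\log(np)/n} + r_n$ while $\lambda\cdot\min_{l\in\mathcal N\cap\mathcal A}[\w]_l^{-1}\cdot|\mathcal N\cap\mathcal A|^{1/2} = o(1)$, i.e. the penalty is strong on zeros and weak on nonzeros.

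\emph{Stage 3: conclude selection consistency.} Once the event ``padded oracle $=$ global minimizer'' has probability $\to 1$, we get $\widehat{\vtheta}^* = (\widehat{\vtheta}^o,\0)$ on that event, so $[\widehat{\vtheta}^*]_l = 0$ exactly for all $l\in\mathcal N^c$, giving $\widehat{\mathcal N}\subseteq\mathcal N$ with probability $\to 1$. For the reverse inclusion I would use that on $\mathcal N$ the oracle components are bounded away from zero up to the $O_P(r_n)$ error: since $\min_{l\in\mathcal N}|[\vtheta]_l|$ is bounded below by a ``beta-min''-type constant (part of the setup, as only genuinely distinct clusters contribute nonzero fused differences) and $r_n\to 0$, we have $\min_{l\in\mathcal N}|[\widehat{\vtheta}^o]_l| > 0$ with probability $\to 1$, hence $\widehat{\mathcal N}=\mathcal N$.

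The main obstacle I expect is Stage 2, specifically controlling the \emph{adaptive weights} $[\w]_l$ uniformly. Unlike in a fixed-design adaptive lasso, here $\bar{\vbeta}$ is itself the output of $Q$ random-spanning-tree lasso fits, so one must argue that averaging over trees does not destroy the (small) order of $|[\widetilde{\H}^*\bar{\vbeta}]_l|$ on fused edges while keeping it bounded below on non-fused edges — and one must do this uniformly over the random $\mathcal N^c\cap\mathcal A$. The cleanest route is probably to bound $\|\bar{\vbeta}-\vbeta_{\mathrm{true}}\|_\infty$ by the worst single-trial bound (using Jensen on the average and a union bound over $Q$ trials), then feed that into the weight calibration; making the rates in Assumption (A7) explicit enough that both the ``zeros'' and ``nonzeros'' conditions can be simultaneously satisfied is the delicate bookkeeping. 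A secondary technical point is that the ridge term $\rho\widetilde{\D}$ makes the problem only conditionally well-conditioned on the spline block, so the restricted-eigenvalue / invertibility argument for $\E^\top\E/n + \rho\widetilde{\D}$ must use Assumption (A?) guaranteeing that $\widetilde{\X}_{\mathcal N}$ has full column rank asymptotically and is not collinear with the spline span — essentially an identifiability condition between $g$ and the linear part, analogous to the ones in \cite{liu2019subgroup}.
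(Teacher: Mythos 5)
Your proposal follows essentially the same route as the paper: a primal--dual witness argument that constructs a $\lambda$-perturbed, zero-padded version of the oracle estimator, verifies the KKT conditions (trivially on $\mathcal{N}\cap\mathcal{A}$ by construction of the perturbation, and on $\mathcal{N}^c\cap\mathcal{A}$ via the bias/noise/perturbation decomposition with sub-Gaussian tails and Assumption (A7)), and then reads off the extra $O_P(\lambda|\mathcal{N}\cap\mathcal{A}|^{1/2})$ term from the size of the perturbation, combining with Lemma \ref{LEM:oracle} for the final rates. The one place you go beyond the paper is the weight calibration: the paper does not derive the behaviour of the adaptive weights from the forest-averaged initial estimator $\bar{\vbeta}$ but simply imposes the required orders through Assumption (A7), so the ``main obstacle'' you identify is assumed away rather than proved.
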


Theorem \ref{THE:consistency} (\ref{Eq: them1_1})-(\ref{Eq: them1_2}) shows that besides the error terms of the oracle estimator, the error bounds are also affected by the tuning parameter $\lambda$ and the term $|\mathcal{N}\cap \mathcal{A}|$. 
Recall that $|\mathcal{N}|$ is the number of nonzero elements in the true linear coefficient $\vtheta$, and $\mathcal{A}$ is the index set of the penalized elements.
By having the same settings as the oracle estimator and further setting $\lambda = o(|\mathcal{N}\cap \mathcal{A}|)$, we have the estimation consistency of the proposed estimator.
Additionally, the error term $O_P(\lambda|\mathcal{N}\cap \mathcal{A}|)$ indicates that a tree with less inter-cluster edges will lead to a better estimation accuracy.
This further explains why the adaptive tree structure performs better than a randomly generated tree.


\section{Simulation Study}\label{Section: Simulation}

In this section, we present simulation studies to show the performance of our method under different scenarios.
The computations were performed on the linux server\footnote{https://researchit.las.iastate.edu/biocrunch}.
Totally 1000 locations are randomly generated in a square domain $[0,1]\times [0,1]$.
The underlying model is 
\begin{align}
y(\s_i) = g(\s_i) + x_1(\s_i)\beta_1(\s_i) + x_2(\s_i) \beta_2(\s_i) + \epsilon(\s_i), ~ \epsilon(\s_i) \stackrel{i.i.d.}{\sim} N(0,\sigma^2).
\end{align}
The variance of the random error $\epsilon(\s_i)$ is set as $\sigma^2=0.1$.
The true coefficients are to be a constant within each cluster and the numbers of clusters are $4$ for $\beta_1$, $5$ for $\beta_2$ (see Figure \ref{Fig: True Parameters}).
The nonlinear term $g(\s_i)$ is generated from the spatial Gaussian process with zero mean and an isotropic exponential covariance function as $C(g(\s_1),g(\s_2)) = \exp(-\|\s_1-\s_2\|/r)$, where $r$ is the range parameter.
In our numerical study, we consider two different settings: weak correlation with $r = 1,$ and strong correlation with $r = 10.$ The generated nonlinear surfaces are shown in Figure \ref{Fig: True g}.

\begin{figure}[t!]
\begin{center}
\begin{tabular}{@{}cc@{}}
\includegraphics[width=.4\textwidth]{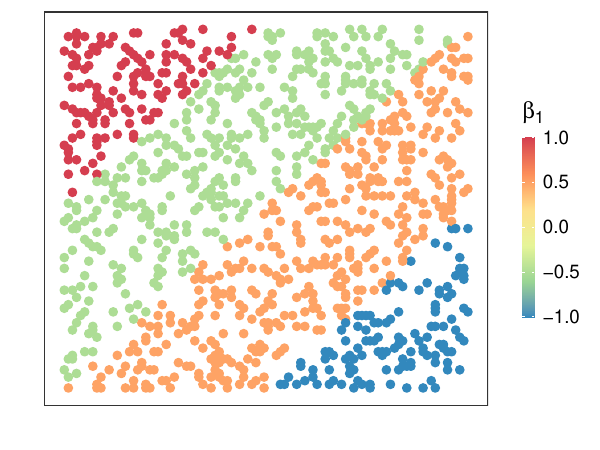}&
\includegraphics[width=.4\textwidth]{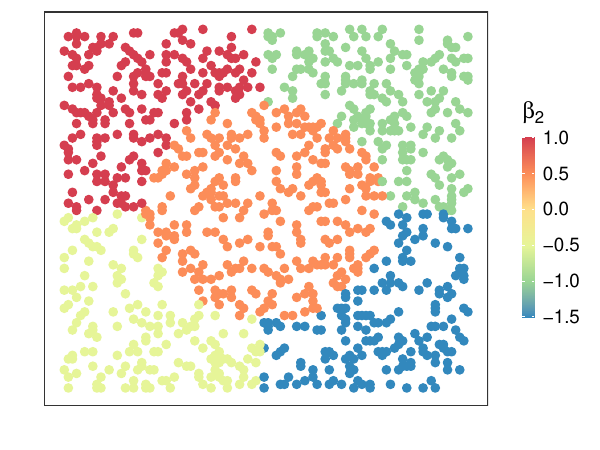}
\end{tabular}
\caption{Spatial structures of true coefficients $\beta_1$ and $\beta_2$} \label{Fig: True Parameters}
\end{center}
\end{figure}

\begin{figure}[t!]
\begin{center}
\begin{tabular}{@{}cc@{}}
\includegraphics[width=.4\textwidth]{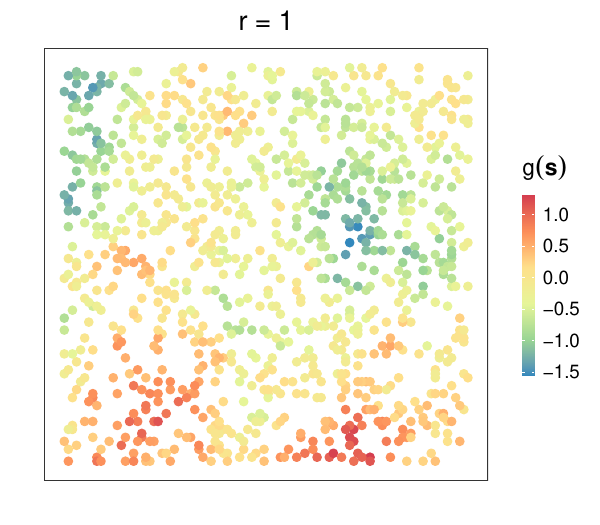}&
\includegraphics[width=.4\textwidth]{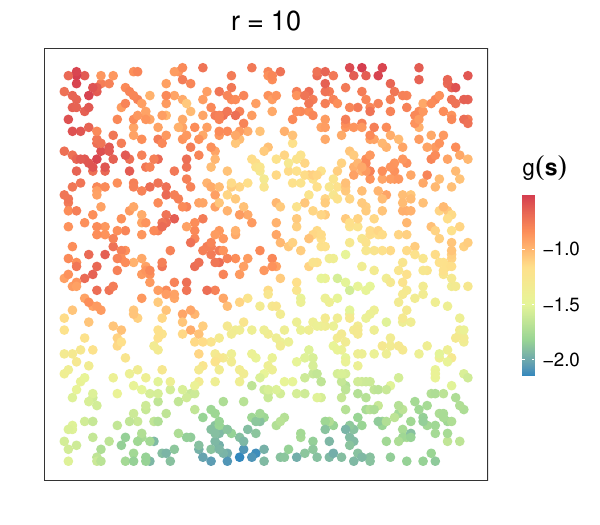}
\end{tabular}
\caption{Spatial varying nonlinear function $g(\s)$ under different range parameters $r$} \label{Fig: True g}
\end{center}
\end{figure}

In practice, the covariates appear to a correlation with some other covariates as well as a spatial correlation among the location.
Thus, both the spatial dependence and collinearity are introduced among $\{x_1(\s_i),x_2(\s_i)\}_{i=1}^{n}.$
Specifically, two independent realizations $\{z_1(\s_1)\}_{i=1}^{n}$ and $\{z_2(\s_1)\}_{i=1}^{n}$ are obtained from the spatial Gaussian process with zero mean and an isotropic exponential covariance function as $C(z_k(\s_i),z_k(\s_j)) = \exp(-\|\s_i-\s_j\|/r^\prime),$ $k=1,2.$
Then, $x_1(\s_i)$ and $x_2(\s_i)$ are generated with $x_1(\s_i) = z_1(\s_i)$ and $x_2(\s_i) = c z_1(\s_i) + \sqrt{1-c^2} z_2(\s_i).$
In our simulation, we use the case with $c=0.75$ and $r^\prime = 0.1$.

We compare our SHAPLM with the geographically weighted regression (GWR) model and the parametric spatial clustered coefficient model (PSCCM).
In the estimation of PSCCM, the fusion penalty is adopted for both the intercept and the covariate coefficients:
$\min \frac{1}{2n}\sum_{i=1}^{n}\big(y(\s_i) - \beta_0(\s_i) - \x(\s_i)^\top\vbeta(\s_i)\big)^2 + \lambda \mathcal{P}(\vbeta).$
The key difference between PSCCM and SHAPLM is that PSCCM uses the spatially piece-wise constant to fit the intercept while SHAPLM adopts the spatially smoothing intercept.
For both the PSCCM and SHAPLM, we consider the number of the trees $Q$, i.e. the trials of estimation, as $\{0,5,10,20\}.$
When the tree number is $0,$ it is corresponding to the minimum spanning tree penalty for the final estimation.
Note that the PSCCM with $Q=0$ is the method proposed in \cite{li2019spatial}.

The following criteria are adopted to quantify the estimation efficiency:
1) the mean squared error (MSE) for the model estimation
\begin{align}
\text{MSE}(\vbeta) = \frac{1}{n} \sum_{i=1}^{n} \|\vbeta(\s_i)-\hat{\vbeta}^*(\s_i)\|^2~ 
\text{and}~ 
\text{MSE}(g) = \frac{1}{n} \sum_{i=1}^{n} \big(g(\s_i)-\hat{g}^*(\s_i)\big)^2;
\end{align}
2) the Rand index (RI) \citep{rand1971objective} for the partition recovery, which is the percentage of correct partition in the estimation.
The value of RI lies between 0 and 1. The higher RI indicates better performance of the method.
The simulation results are shown in Table \ref{Table: Simu1}.

First of all, it can be seen that for SHAPLM, as the tree number increases, the estimation performances get improved. 
For example, using the forest Lasso penalty with 10 trees, the SHAPLM reduced MSE of $\vbeta$ by around 75\% and MSE of $g(\cdot)$ by about 85\%, compared with the minimum spanning tree penalty method ($Q =0$) and under the weak correlation setting.
By comparing the results of PSCCM and SHAPLM, we can see that the estimation performance of PSCCM is not stable and worse. This is because the PSCCM is using a spatial piece-wise constant surface to approximate the nonlinear function, which causes the incorrect adaptive weights.
Additionally, the SHAPLM has a better estimation accuracy as the nonparametric function $g(\cdot)$ gets more smooth, i.e. the stronger correlation. 
This is from the fact that when the nonparametric part becomes more smooth, the estimation efficiency of the bivariate spline method gets improved, which in turn improves the estimation performance of the parametric part.
Additionally, compared with the estimation results of GWR, the MSE of SHAPLM is significantly reduced with $Q \ge 1$.
This performance gain is because the clustered constant structure of $\vbeta$ breaks the spatial smoothly varying coefficient assumption for GWR.

\begin{table}[t!]
\centering
{\small\caption{Estimation results based on BIC criterion}\label{Table: Simu1}
\vspace{0.1in}
\begin{tabular}{lc|cccc|cccc}
  \hline
  \hline
  \multirow{2}{*}{Method }& \multirow{2}{*}{\#Tree} & \multicolumn{4}{c|}{Weak Correlation $r = 1$} & \multicolumn{4}{c}{Strong Correlation $r = 10$}\\
 & & $\text{MSE}(\vbeta)$ & $\text{MSE}(g)$ & $\text{RI}(\beta_1)$ & $\text{RI}(\beta_2)$ & $\text{MSE}(\vbeta)$ & $\text{MSE}(g)$ & $\text{RI}(\beta_1)$ & $\text{RI}(\beta_2)$ \\ 
 \hline
 GWR & - & 0.230 & 0.158 & - & - & 0.199 & 0.128 & - & - \\ 
 \hline
\multirow{4}{*}{PSCCM}
& 0 & 0.246 & 0.166 & 0.64 & 0.75  & 0.624 & 1.496 & 0.58 & 0.66 \\  
& 5 & 0.220 & 0.179 & 0.65 & 0.79 & 0.661 & 1.531 & 0.57 & 0.67 \\ 
&10 & 0.238 & 0.190 & 0.62 & 0.78 &  0.656 & 1.522 & 0.57 & 0.67 \\ 
& 20 & 0.237 & 0.197 & 0.61 & 0.79  & 0.813 & 2.963 & 0.54 & 0.67 \\ 
    \hline 
\multirow{4}{*}{SHAPLM} 
& 0 &  0.234 & 0.296 & 0.64 & 0.76  & 0.160 & 0.143 & 0.71 & 0.81 \\ 
& 5 &  0.067 & 0.055 & 0.86  & 0.91 & 0.047 & 0.012 & 0.91 & 0.94 \\ 
& 10 & 0.059 &  0.049 &   0.87 &  0.91 & 0.041 & 0.012 & 0.92 & 0.94 \\ 
& 20 & 0.054 & 0.045 & 0.88 & 0.92 & 0.034 &  0.011  & 0.93 & 0.95 \\ 
   \hline
   \hline
\end{tabular}}
\end{table}

Next, we examine the computation and estimation performances of our proposed forest Lasso penalty by comparing it with the graph Lasso penalty \citep{hallac2015network}.
For the graph Lasso penalty, we generate the initial graph $\Gc$ with the Delaunay triangulation and formulate the fusion penalty as $\sum_{k=1}^{ {p}} \sum_{(\s_i,\s_j) \in \Gc}|[\vbeta(\s_i)]_k-[\vbeta(\s_i)]_k|$.
We apply the two methods to the dataset with strong correlation $r=10$ and compare the performances with computation time and mean square errors, MSE$(\vbeta)$ and MSE$(\g)$.
The results are shown in Figure \ref{Fig: SHAPLM_SASA}.
It can be seen that though the graph Lasso has a better estimation accuracy than the forest Lasso with $Q=0$ (i.e. the distance-based MST Lasso \citep{li2019spatial}), the forest Lasso outperforms the graph Lasso in both MSE$(\vbeta)$ and MSE$(\g)$ when $Q \ge 3$.  
Meanwhile, our forest Lasso gains a significant improvement in terms of computation efficiency.
The graph Lasso requires around 50 min to finish the computation.
However, to achieve the same accuracy, our forest Lasso only takes less than 10 min (when $Q=3$).
Thus, the proposed forest Lasso is much more efficient than the traditional graph Lasso penalty.

\begin{figure}[t!]
\begin{center}
\begin{tabular}{@{}c@{}}
\includegraphics[width=1\textwidth]{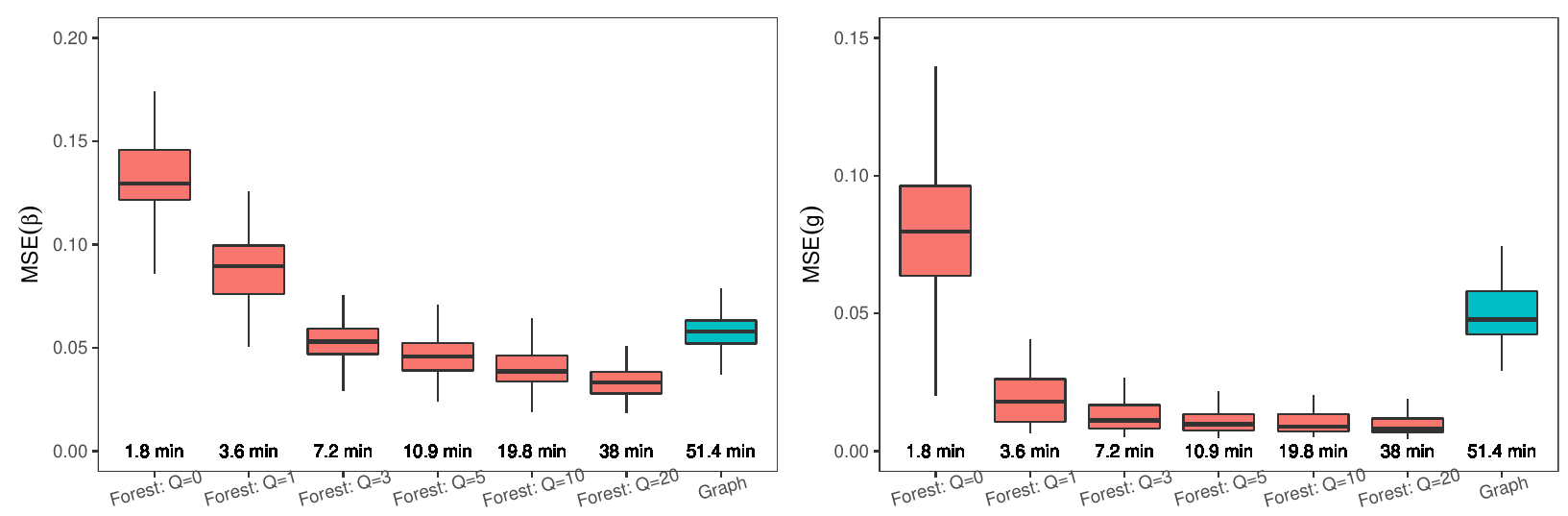}
\end{tabular}
\caption{The boxplots of MSE for the graph Lasso method and forest Lasso method with different number of trees $Q$. Computation times are reported above each method’s label.} \label{Fig: SHAPLM_SASA}
\end{center}
\end{figure}


\section{Temperature Data Analysis}\label{Section: RealData}

Air temperature (AT) is a key indicator when studying the impact of climate change on people, ecosystems, and energy system \citep{li2018developing}. 
However, high-resolution air temperature data are usually not widely available.
On the one hand, the air temperature data can only be measured and collected by some specific weather stations. 
This makes the data only available for a few spatial locations.
On the other hand, the air temperature data usually exhibit a complex spatial heterogeneity, due to the complex biophysical and socioeconomic factors.
Thus, we can not estimate the data with simple interpolation.
Fortunately, thanks to the advanced remote sensing technology, another data source, the land surface temperature (LST), can be obtained from the high-resolution images.
Consider these two temperatures are highly correlated. Hence, we can use the land surface temperature to predict air temperature.

We collect the air temperature data from weather stations of the Global Historical Climatology Network-Daily (GHCN-D) dataset (version 3), reported by the National Oceanic and Atmospheric Administration (NOAA) National Climate Data Center (NCNC) \citep{menne2012global,menne2012overview}. 
The GHCN-D provides meteorological observations from more than 90,000 land-based weather stations all over the world.
In our study, we focus on the stations in the conterminous United States and the air temperature record in 2010.
We adopt the gap-filled MODIS daily land surface temperature (LST) data \citep{li2018creating} as the auxiliary covariate.
The gap-filled MODIS daily LST data was produced by applying the three-step hybrid gap-filled approach (i.e. daily merging, spatio-temporal gap-filling, and temporal interpolation) to the MODIS LST daily observations, MYD11A1/MOD11A1 from Terra and Aqua satellites.

We apply our proposed SHAPLM to study the relationship between the daily maximum air temperature and land surface temperature.
The following spatial partial linear model is adopted,
\begin{equation}\label{Eq: AT_model}
y(\s_i) = g(\s_i) + x_{1}(\s_i)\beta_1(\s_i)  + \epsilon(\s_i),
\end{equation}
in which $y(\s_i)$, $x_1(\s_i)$ and $\epsilon(\s_i)$ are the daily maximum air temperature, the daily land surface temperature, and the measurement error at the $i$th station with the location $\s_i$, respectively.
We aim to identify spatial heterogeneous patterns based on the coefficient $\beta_1$.
In our implementation, the initial graph $\Gc$ is generated with the Delaunay triangulation and the forest lasso adopts $Q=10$. 

\subsection{Date 2010-12-21, Winter Solstice}

We first study the data of 2010-12-21, the winter solstice of the year. 
By removing the stations without data records, we have 4,637 remained stations.
The air temperature and land surface temperature are shown in Figure \ref{Fig: AT_LST_1221}.
From Figure \ref{Fig: AT_LST_1221} (a), it can be seen that the air temperature of the day ranges from $-20$ to $30$ degrees Celsius , the southern part of the United States has a higher temperature and the northern part has a lower temperature. 
The stations with temperatures above $20$ degrees Celsius  mainly concentrate around the state of Texas,  while those with extremely low temperatures (i.e. below $-10$ degrees Celsius) are in the state of Montana and North Dakota.
For the land surface temperature in Figure \ref{Fig: AT_LST_1221} (b), the basic pattern is similar to that of air temperature, while the region with extremely low temperature doesn't appear in the state of Montana.

\begin{figure}[!ht]
    \begin{minipage}{0.45\textwidth}
    \centering
        \begin{tabular}{c}  
        \includegraphics[width = 1\textwidth]{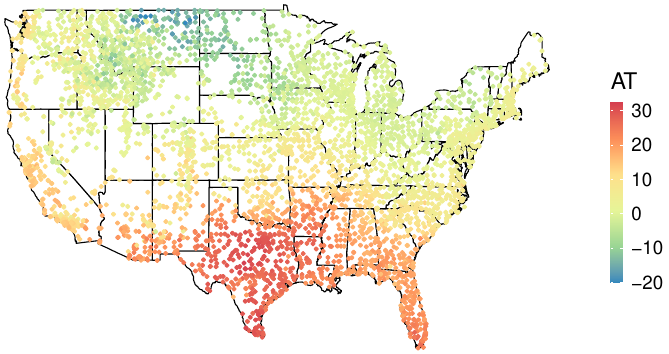}\\
        (a) Air temperature
        \end{tabular}
    \end{minipage}
    \hspace{0.1in}
    \begin{minipage}{0.45\textwidth}
    \centering
        \begin{tabular}{c}  
        \includegraphics[width = 1\textwidth]{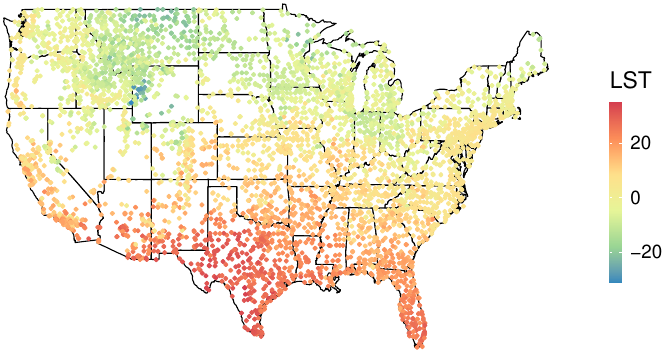}\\
        (b) Land surface temperature
        \end{tabular}
    \end{minipage}
    \caption{The temperature data of 2010-12-21.}\label{Fig: AT_LST_1221}
\end{figure}

We apply our SHAPLM to fit the data. The model estimations are shown in Figure \ref{Fig: SHAPLM_1221}.
It can be seen from Figure \ref{Fig: SHAPLM_1221} (a) that the estimated intercept $g$ range from $-10$ to about $25$, which explains the majority of the spatial variation.
The states of North Dakota and South Dakota have low values of estimated intercepts, while the states of Texas, Louisiana, and Mississippi have higher estimated intercepts.
From Figure \ref{Fig: SHAPLM_1221} (b), we can see that the United States is mainly divided into two parts according to the estimated coefficient $\hat{\beta}_1$, the eastern part with $\hat{\beta}_1 \le 0.3$ and the western part vice versa.
Furthermore, it is identified as a region in the northern part of Montana with $\hat{\beta}_1$ higher than $0.6$.
By checking with the temperature data, we find that this region has an extremely low air temperature around $-20$ degrees Celsius.

\begin{figure}[!ht]
    \begin{minipage}{0.45\textwidth}
    \centering
        \begin{tabular}{c}  
        \includegraphics[width = 1\textwidth]{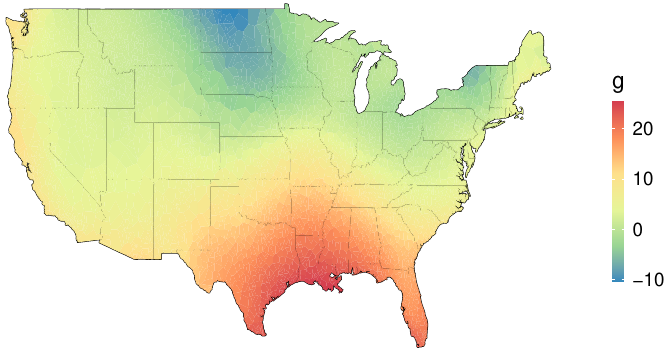}\\
        (a) Estimated intercept of SHAPLM
        \end{tabular}
    \end{minipage}
    \hspace{0.1in}
    \begin{minipage}{0.45\textwidth}
    \centering
        \begin{tabular}{c}  
        \includegraphics[width = 1\textwidth]{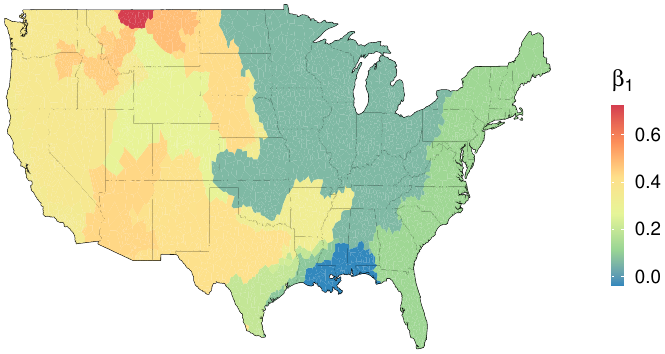}\\
        (b) Estimated coefficient of SHAPLM
        \end{tabular}
    \end{minipage}
    \caption{The SHAPLM estimation on data of 2010-12-21.}\label{Fig: SHAPLM_1221}
\end{figure}

We also analyze the data with the methods of GWR and PSCCM.
The fitting results are shown in Figure \ref{Fig: GWR_PSCCM_1221}.
From Figure \ref{Fig: GWR_PSCCM_1221} (a), the estimated intercept of GWR has a similar pattern as that of SHAPLM: the low values concentrate in the states of North Dakota and South Dakota, while the high values are in the states of Texas and Louisiana.
But SHAPLM has a smoother estimated intercept than GWR.
For the estimated coefficient, it can be seen from Figure \ref{Fig: GWR_PSCCM_1221} (b) that there also exists a region with large $\hat{\beta}_1$ in the northern part of Montana, which is consistent with the estimation of SHAPLM.
However, the estimation of GWR has a few outliers. 
For example, there is a small area in the northern part of Texas that has the value of $\hat{\beta}_1$ below $-1$, and another region at the border of North Dakota and South Dakota with $\hat{\beta}_1$ higher $1.5$.
It is hard to properly interpret these outlier regions. 

For the estimation of PSCCM shown in Figure \ref{Fig: GWR_PSCCM_1221} (c)-(d), the fitting results are not consistent with those of GWR and SHAPLM: the estimated intercept ranges from around $2.4$ to $3.6$, and the estimated coefficient can not reveal the heterogeneous patterns in Montana and Northern part of the United States, which are identified by SHAPLM and GWR.
Moreover, PSCCM detects several small clusters based on the estimated coefficient $\hat{\beta}_1$, which are marked with red circles.
It can be found that the small clusters in circle $1$ and $2$ are both appeared in the estimated intercept and coefficient, which indicates that the cluster detection in PSCCM is mutually affected.

\begin{figure}[!ht]
    \begin{minipage}{0.45\textwidth}
    \centering
        \begin{tabular}{c}  
        \includegraphics[width = 1\textwidth]{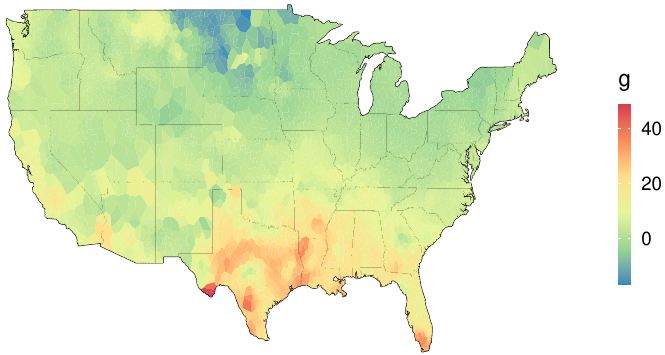}\\
        (a) Estimated intercept of GWR
        \end{tabular}
    \end{minipage}
    \hspace{0.1in}
    \begin{minipage}{0.45\textwidth}
    \centering
        \begin{tabular}{c}  
        \includegraphics[width = 1\textwidth]{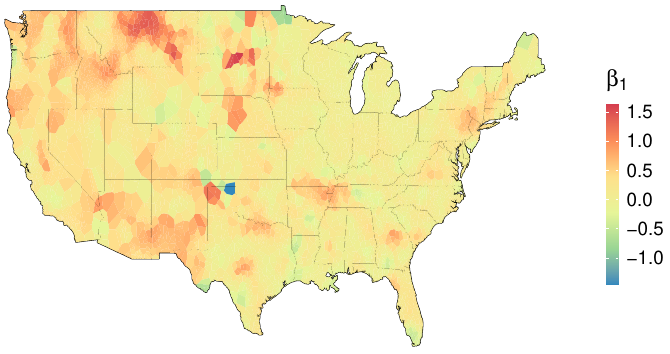}\\
        (b) Estimated coefficient of GWR
        \end{tabular}
    \end{minipage}\\
    \begin{minipage}{0.43\textwidth}
    \vspace{.2in}
    \centering
        \begin{tabular}{c}  
        \includegraphics[width = 1\textwidth]{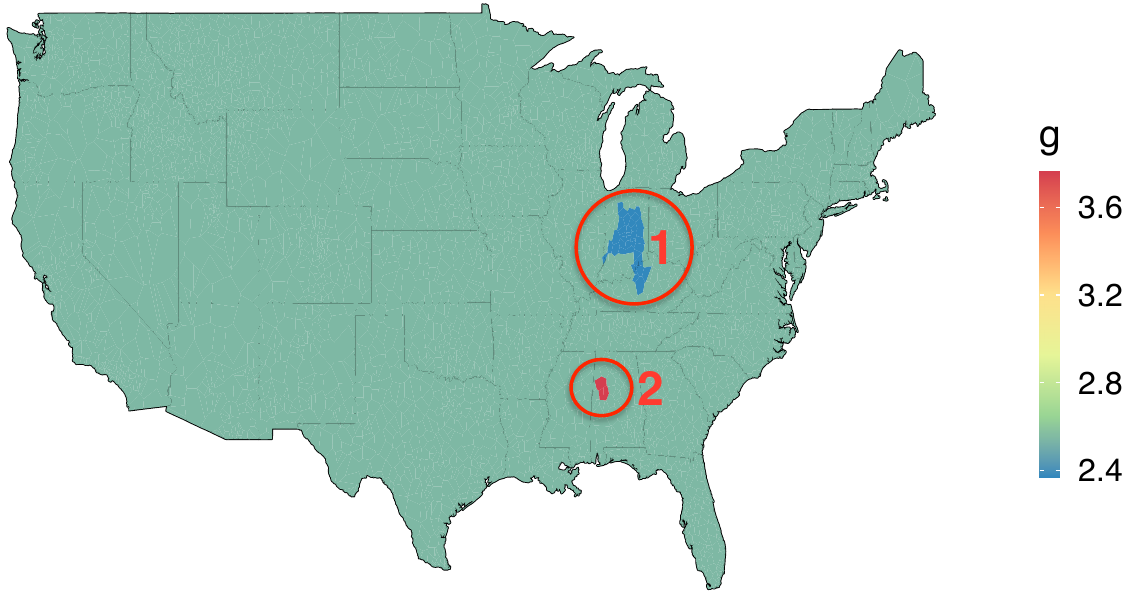}\\
        (c) Estimated intercept of PSCCM
        \end{tabular}
    \end{minipage}
    \hspace{0.3in}
    \begin{minipage}{0.43\textwidth}
    \vspace{.2in}
    \centering
        \begin{tabular}{c}  
        \includegraphics[width = 1\textwidth]{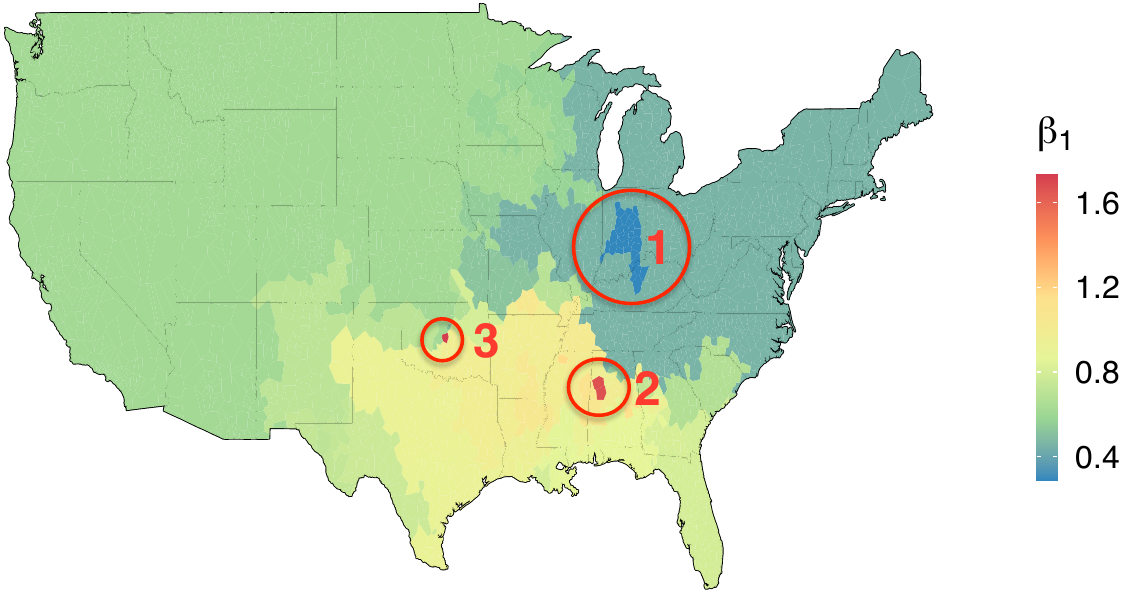}\\
        (d) Estimated coefficient of PSCCM
        \end{tabular}
    \end{minipage}
    \caption{The estimations of GWR and PSCCM on data of 2010-12-21.}\label{Fig: GWR_PSCCM_1221}
\end{figure}

\subsection{Date 2010-06-21, Summer Solstice}

We also study the data of the summer solstice on 2010-06-21. 
Totally 7001 stations have the temperature records, and the temperature data are shown in Figure \ref{Fig: AT_LST_0621}.
It can be seen that temperature patterns of the summer period are more complex than those of wintertime.
In Figure \ref{Fig: AT_LST_0621} (a), we find that the air temperature ranges from about $10$ to $40$ degrees Celsius.
The southern United States has a high air temperature of about $40$ degrees Celsius, while the northwestern United States keeps a low air temperature around $10$ degrees Celsius. 
While for the land surface temperature shown in \ref{Fig: AT_LST_0621} (b), we find that the temperature is higher than $60$ degrees Celsius in some regions of the states of California, Arizona, and New Mexico.
This might be because these regions are deserts.
Also, both of the temperatures show a stable pattern in the eastern part of the United States, while a strong heterogeneity in the western United States. 
We think this is because the topographic structures in the western United States are more complex than the eastern part.

\begin{figure}[!ht]
    \begin{minipage}{0.45\textwidth}
    \centering
        \begin{tabular}{c}  
        \includegraphics[width = 1\textwidth]{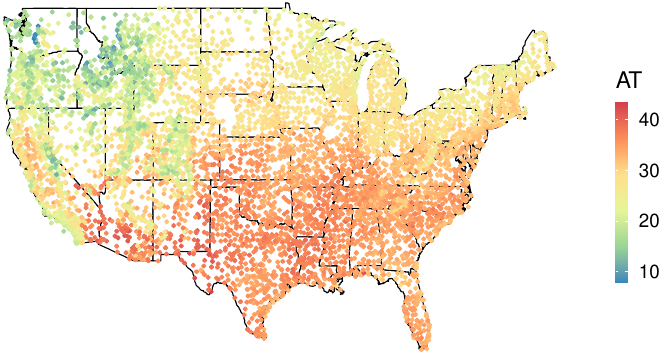}\\
        (a) Air temperature
        \end{tabular}
    \end{minipage}
    \hspace{0.1in}
    \begin{minipage}{0.45\textwidth}
    \centering
        \begin{tabular}{c}  
        \includegraphics[width = 1\textwidth]{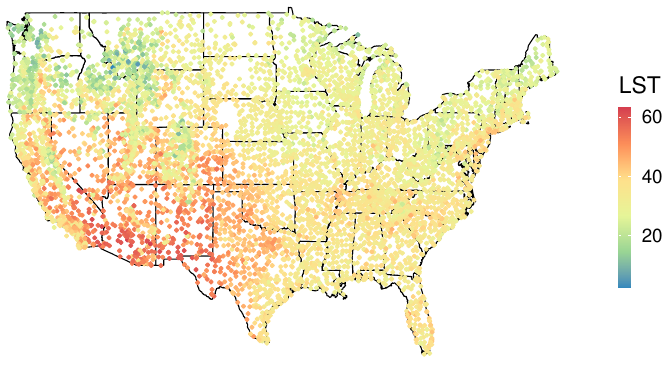}\\
        (b) Land surface temperature
        \end{tabular}
    \end{minipage}
    \caption{The temperature data of 2010-06-21.}\label{Fig: AT_LST_0621}
\end{figure}

Figure \ref{Fig: SHAPLM_0621} shows the estimation results of SHAPLM on the data.
The estimated intercept follows the pattern of the air temperature: The eastern part of the United States has high values while the western part is with lower values. 
The estimated coefficient $\hat{\beta}_1$ reveals the spatial heterogeneous cluster patterns.
We find that majority of the detected clusters are in the southwestern United States.
This is because of the complex temperature pattern in this area.
For example, there is a region with $\hat{\beta}_1$ smaller than $0.25$ near the Sierra Nevada.
This region is corresponding to the area with local low air temperature, around $10$ degrees Celsius.
Also, another region with $\hat{\beta}_1 \ge 0.40$ is detected near the Sonoran Desert.
This region has an extremely high land surface temperature, which is above $60$ degrees Celsius.
These cluster findings imply that the temperature relationship might be affected by the topographic structures.

\begin{figure}[!ht]
    \begin{minipage}{0.45\textwidth}
    \centering
        \begin{tabular}{c}  
        \includegraphics[width = 1\textwidth]{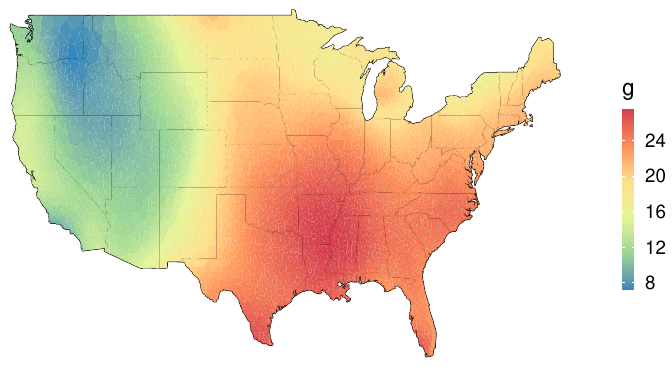}\\
        (a) Estimated intercept of SHAPLM
        \end{tabular}
    \end{minipage}
    \hspace{0.1in}
    \begin{minipage}{0.45\textwidth}
    \centering
        \begin{tabular}{c}  
        \includegraphics[width = 1\textwidth]{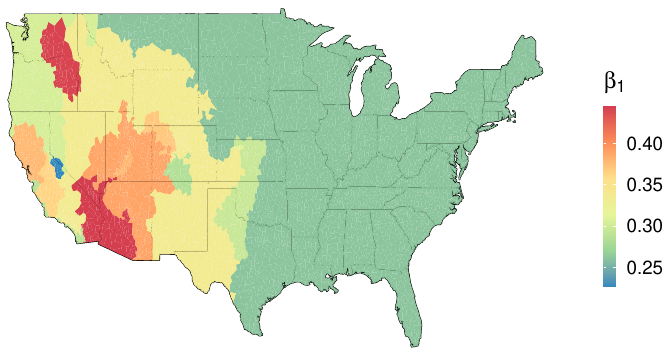}\\
        (b) Estimated coefficient of SHAPLM
        \end{tabular}
    \end{minipage}
    \caption{The SHAPLM estimation on data of 2010-06-21.}\label{Fig: SHAPLM_0621}
\end{figure}

We also apply GWR and PSCCM to the data (see Figure \ref{Fig: GWR_PSCCM_0621}), while less information can be obtained from the estimation results.
For GWR estimation, there are some outliers marked in the circle $1$.
These outliers have the estimated intercepts both below $0$ and above $75$, as well as the estimated coefficients both near $-3$ and $1$.
By checking the station map shown in Figure \ref{Fig: AT_LST_0621}, this might be caused by the lack of observations at the boundary.
For the estimation of PSCCM, the intercept has the same estimated value of about $11$ over the spatial domain.
This estimated result is inconsistent with the estimations of SHAPLM and GWR.
The estimated coefficient $\hat{\beta}_1$ of PSCCM has similar patterns as the estimated intercept of SHAPLM and GWR: low values in the northwestern part and high values in the southeastern part.
We think this is because the homogeneous estimated intercept makes the coefficient $\hat{\beta}_1$ explain the spatial variation in PSCCM.

\begin{figure}[!ht]
    \begin{minipage}{0.45\textwidth}
    \centering
        \begin{tabular}{c}  
        \includegraphics[width = 1\textwidth]{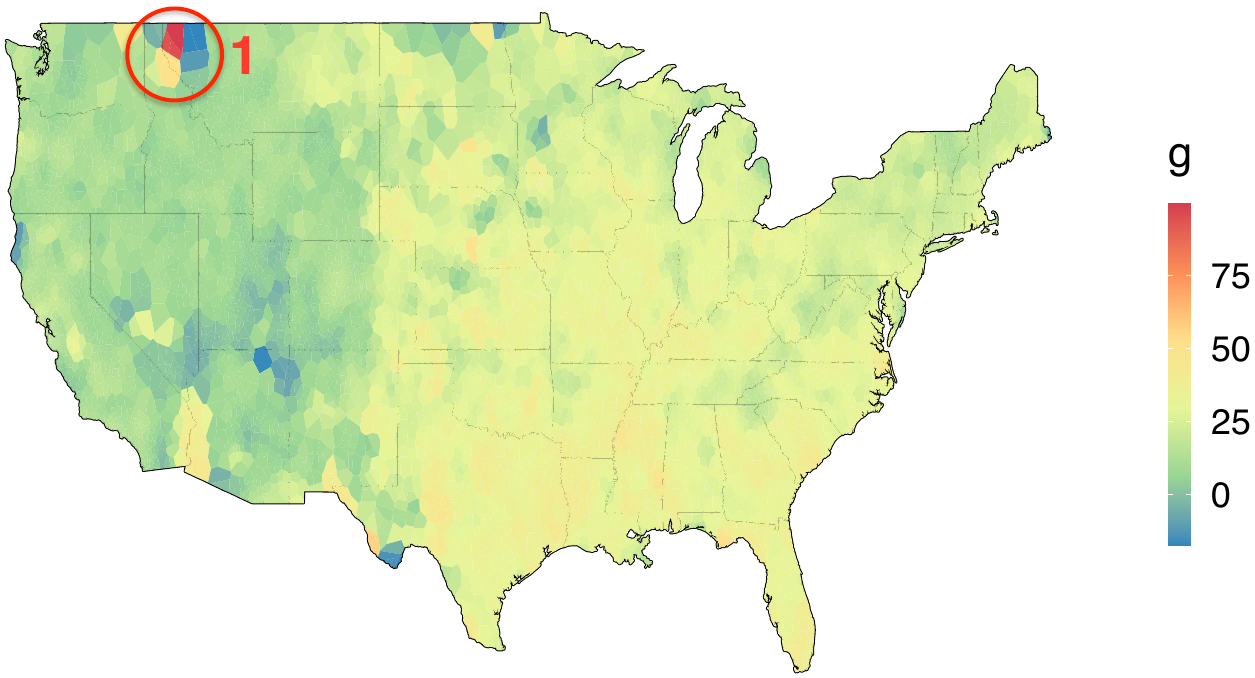}\\
        (a) Estimated intercept of GWR
        \end{tabular}
    \end{minipage}
    \hspace{0.1in}
    \begin{minipage}{0.45\textwidth}
    \centering
        \begin{tabular}{c}  
        \includegraphics[width = 1\textwidth]{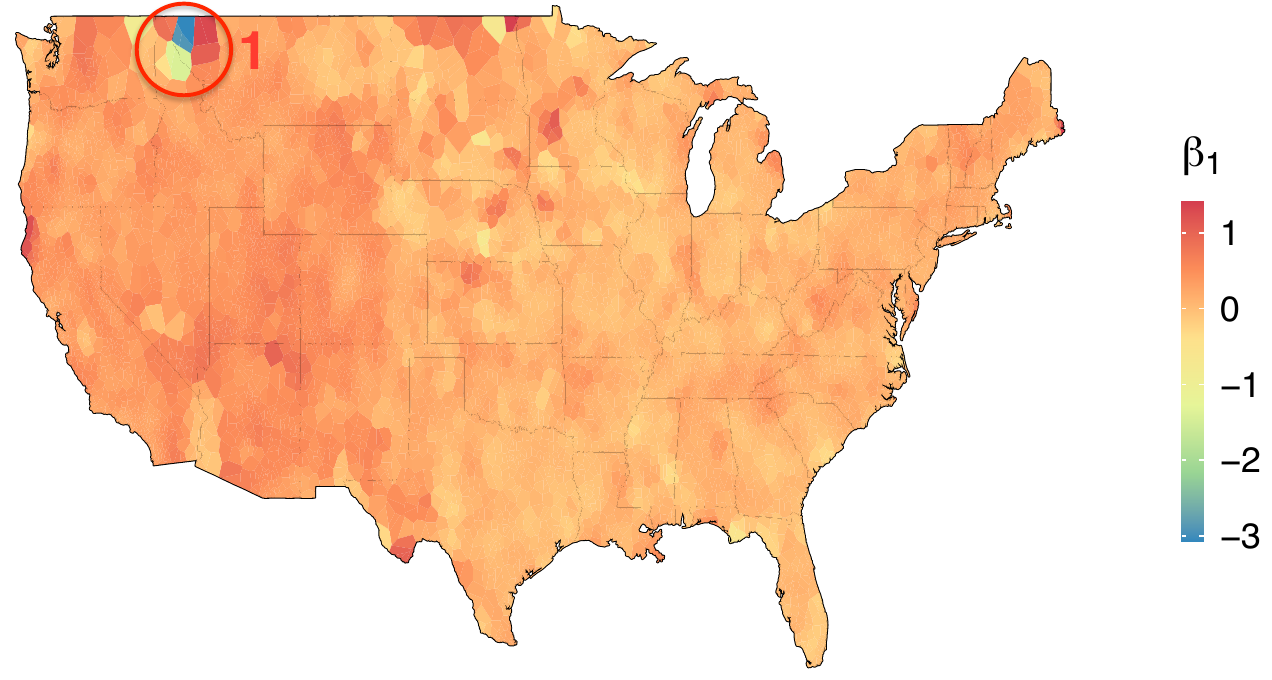}\\
        (b) Estimated coefficient of GWR
        \end{tabular}
    \end{minipage}\\
    \begin{minipage}{0.43\textwidth}
    \vspace{.2in}
    \centering
        \begin{tabular}{c}  
        \includegraphics[width = 1\textwidth]{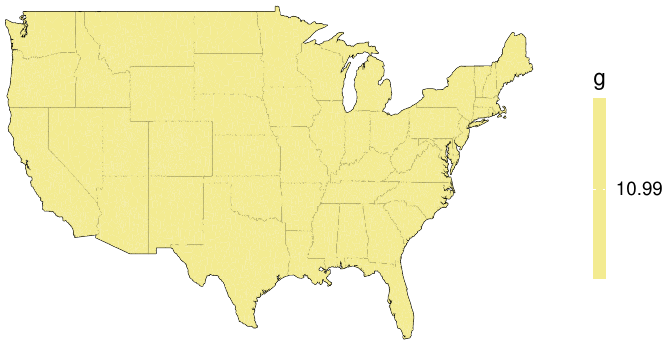}\\
        (c) Estimated intercept of PSCCM
        \end{tabular}
    \end{minipage}
    \hspace{0.3in}
    \begin{minipage}{0.43\textwidth}
    \vspace{.2in}
    \centering
        \begin{tabular}{c}  
        \includegraphics[width = 1\textwidth]{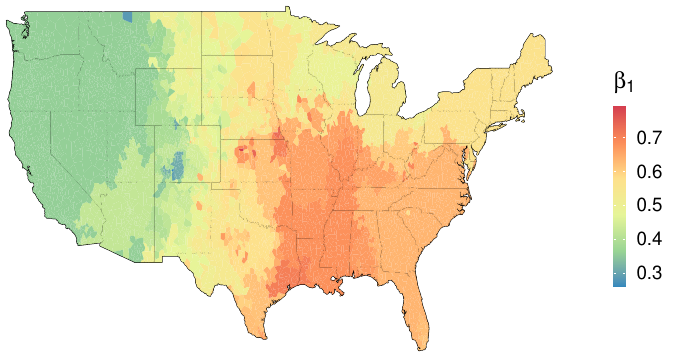}\\
        (d) Estimated coefficient of PSCCM
        \end{tabular}
    \end{minipage}
    \caption{The estimations of GWR and PSCCM on data of 2010-06-21.}\label{Fig: GWR_PSCCM_0621}
\end{figure}


\section{Discussion}\label{Section: conclusion}

In this work, we study the spatial coefficient clustering problem and develop a spatial heterogeneous additive partially linear model.
In our model, the spatially contiguous clusters are revealed by the coefficients of the linear components while a nonlinear component is added to deal with the spatially smoothing effect.
The method of bivariate spline over triangulation is adopted to approximate the nonlinear component, and a novel penalized method named the forest lasso is proposed for the spatial homogeneity pursuit of the linear coefficients.
With the proper linear transformation, we can efficiently estimate the model parameter and cluster the spatial locations simultaneously by solving a doubly penalized least square problem.

This study can be further extended in the following aspects.
First, our model only considers the intercept to be spatially smoothing, while the linear coefficients for all the covariates are spatially piece-wise constants. 
In many real applications, part of covariates might also have spatially smoothly varying effects over the domain. 
Thus, it would be interesting to propose a method that can also identify whether the effect of a covariate is spatially piece-wise constant or smoothly varying.
Secondly, as shown in our real data study, the spatial cluster pattern varies across days.
We can extend our model to study the spatial-temporal cluster structure on the regression coefficients. 
Furthermore, it is also important to develop statistical inference methods for the hypothesis testing on the clustered coefficients.


\bigskip
\begin{center}
{\large\bf SUPPLEMENTARY MATERIAL}
\end{center}

\section{Assumption}

In this section, we state the technical assumptions for our theoretical analysis.  
 \begin{itemize}
 \item[(A1).] The true non-parametric function $g \in  \mathcal{W}^{d+1,\infty}(\Omega)$, where $\mathcal{W}^{d+1,\infty}(\Omega)$ is the functional space $\{\alpha: |\alpha|_{k,\infty,\Omega} < \infty,
0 \leq k \leq d+1\}$ and $|\alpha|_{\upsilon,\infty,\Omega}=\max_{i+j=\upsilon}\Vert \nabla_{s_{1}}^{i}\nabla_{s_{2}}^{j}\alpha\Vert_{\infty ,\Omega}$.
 \item[(A2).] The density function $f_{\mathbf{s}}(\cdot)$ of $\mathbf{S}$ satisfies
\begin{align}
0 < c_f \leq \inf_{\mathbf{s} \in \Omega } f_{\mathbf{s}}(\mathbf{s}) \leq \sup_{\mathbf{s} \in \Omega } f_{\mathbf{s}}(\mathbf{s}) \leq C_f < \infty.
\end{align}
 The density function $f_{\mathbf{s}}(\cdot)$ of $\mathbf{S}$ is bounded away from zero and infinity on $\Omega$.
 \item[(A3).]  The random variables $X_{i\ell}$ are uniformly  bounded for $i=1,\ldots, n$, $\ell=1,\ldots,np$. Denote that $\lambda_1(\bs{s})\leq\cdots\leq\lambda_{p'+1}(\bs{s})$ be the the eigenvalues of $\mathrm{E} \left\{(1, \widetilde{\X}_{\mathcal{N}}^\top)^{\top}(1, \widetilde{\X}_{\mathcal{N}}^\top)|\mathbf{s}_i=\bs{s} \right\}$. 
 \item[(A4).] The triangulation $\triangle$ is $\pi$-quasi-uniform, i.e.,
$(\min_{ \tau \in \triangle}R_{\tau})^{-1} |\triangle| \leq \pi$ for some positive constant $\pi$.
\item[(A5).] The error vector $\bs{\varepsilon}=(\varepsilon_{1},\ldots,\varepsilon_{n})^{\top}$ has sub-Gaussian tails such that $\mathbb{P}(\bs{a}^\top\bs{\varepsilon} > \|\bs{a}\|x) \leq 2 \exp(-c_1x^2)$ for any vector $\bs{a}$ and $x >0$, where $0 < c_1 < \infty$.
\item[(A6).] The roughness tuning parameter $\rho$ satisfies that $\rho |\triangle|^{-4} \to 0$.
\item[(A7).] For $k \in \mathcal{N}^c \cap \mathcal{A}$, the adaptive weights $\omega_k$, tuning parameters $\lambda$ and $\rho$ satisfy $ |\triangle|^{(d+1)} + n^{-1/2} |\triangle|^{-1} + \rho |\triangle|^{-4} \ll \lambda \omega_k $ and $ |\triangle|^{-1}|\mathcal{N}\cap \mathcal{A}|^{1/2}  \ll \omega_k$. There exist some positive constant $c$, such that $\omega_k \geq c$, $k \in \mathcal{N}\cap \mathcal{A}$.
 \end{itemize}

Assumption (A1)--(A4) are the regular assumptions for analyzing the bivariate spline method over triangulation \citep{lai2013bivariate,yu2019estimation,wang2020simultaneous}. 
Assumption (A5) is the sub-Gaussian assumption for the measurement error, which are standard for high dimensional statistics analysis \citep{buhlmann2011statistics}. Assumption (A6) states the requirement for the roughness parameter.
Assumption (A7) are the order requirements on initial estimators for the selection consistency, which are also adopted in analyzing the adaptive Lasso method on high-dimensional non/semi-parametric models \citep{li2019additive,li2020sparse}.

\section{Notations}
We give the following notations, which will be used in our analysis.
Recall that $u_{kj}(x_k)$, $j \in \mathcal{J}$, are the original B-spline basis functions for the $k$th covariate, where $\mathcal{J}$ is the index set of the basis functions. In the following we define their centered basis $u_{kj}^0(x_k)$ and the standardized basis $U_{kj}(x_k)$.  Let $c_{kj}=\langle u_{kj},1\rangle$, we have
\begin{align}
u_{kj}^0(x_k)=u_{kj}(x_k)-\frac{c_{kj}}{c_{k1}}u_{k1}(x_k), \qquad
U_{kj}(x_k)=\frac{u_{kj}^{0}(x_k)}{\|u_{kj}^{0}\|},~j \in \mathcal{J},
\end{align}
so that $\mathbb{E}U_{kj}(X_k)=0$ and $\mathbb{E}U_{kj}^2(X_k)=1$. Similarly, we define the standardized Bernstein basis polynomials as $B_m^{\ast}(\mathbf{s})=B_m(\mathbf{s})/\|B_m\|$, $m \in \mathcal{M}$,
where $\mathcal{M}$ is the index set of Bernstein basis functions. For example, for bivariate spline space $\mathbb{S}_d^{r}(\triangle)$ containing $N$ triangles, $\mathcal{M}=\{1,2, \cdots, \frac{(d+1)(d+2)N}{2}\}$. Define the approximate space as
\begin{align}
  \mathcal{C}\!=\!\Big\{ \phi: \phi (\mathbf{x}, \mathbf{s})\!= \!\sum_{k=1}^{p}\sum_{j \in \mathcal{J}} \theta_{kj} U_{kj}(x_k)\!+\!\!\! \sum_{m \in \mathcal{M}}\gamma_mB^{\ast}_m(\mathbf{s}), x_k \in [0,1],  \mathbf{s} \in \Omega, \theta_{kj}, \gamma_m \in \mathbb{R} \Big\}.
\end{align}
 Given a triangle $\tau \in \triangle$, let $R_{\tau}$ be the radius of the largest disk contained in $\tau$, and let $|\tau|$ be length of the the longest edge. Define the shape parameter of $\tau$ as the ratio $\pi_{\tau}=|\tau|/R_{\tau}$. Note that when $\pi_{\tau}$ is small, the triangles are relatively uniform in the sense that all angles of triangles in the triangulation $\triangle$ are relatively the same. Denote the size of $\triangle$ by $|\triangle|:=\max \{|\tau|,\tau \in \triangle \}$. For any bivariate function $g: \Omega\to R$, denote $|g|_{\upsilon,\infty,\Omega}=\max_{i+j=\upsilon}\Vert \nabla_{s_{1}}^{i}\nabla_{s_{2}}^{j}g\Vert_{\infty ,\Omega}$.

\section{Supporting Lemmas}

\begin{lem}
\label{LEM:normequity}
Under Assumption (A4), there exist positive constants $c_s$, $C_s$, such that,
\begin{align}
	c_s\sum_{m\in \mathcal{M}}\psi_{m}^{2}\leq
	\Big\|\sum_{m\in \mathcal{M}} \psi_{m}B_{m}^{\ast}\Big\|_{L_2}^{2}\leq C_s\sum_{m\in \mathcal{M}}\psi_{m}^{2}.
\end{align}
\end{lem}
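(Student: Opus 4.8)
The statement is a standard norm-equivalence between the $L_2$ norm of a bivariate spline and the $\ell_2$ norm of its coefficient vector in the standardized Bernstein basis, so the plan is to reduce it to the known stable-local-basis property of Bernstein--B\'ezier polynomials on a single triangle and then glue the triangles together using the quasi-uniformity assumption (A4). First I would recall the classical fact (see \citealp{lai2013bivariate}, Theorem~2.7 or Lemma~2.3 there) that on a single triangle $\tau$, writing a polynomial piece as $\sum_{i+j+k=d} c_{ijk}^{\tau} B_{ijk}^{\tau,d}$, one has
\begin{align}
K_1 \, |\tau| \sum_{i+j+k=d} (c_{ijk}^{\tau})^2 \;\leq\; \Big\| \sum_{i+j+k=d} c_{ijk}^{\tau} B_{ijk}^{\tau,d} \Big\|_{L_2(\tau)}^2 \;\leq\; K_2 \, |\tau| \sum_{i+j+k=d} (c_{ijk}^{\tau})^2,
\end{align}
where $K_1, K_2$ depend only on the degree $d$ (the mass matrix of the Bernstein basis on the reference triangle is fixed, and scaling to $\tau$ introduces the area factor, which is comparable to $|\tau|^2$, and an extra $|\tau|^{-1}$-type factor from the basis normalization; tracking these constants carefully is a routine but slightly fiddly bookkeeping step).

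Next I would pass from the raw Bernstein basis $B_m$ to the standardized basis $B_m^{\ast} = B_m / \|B_m\|$. Since each $B_m$ is supported on a single triangle (up to the small overlap at shared edges, which has measure zero), the per-triangle estimate above applied with a single nonzero coefficient gives $\|B_m\|_{L_2}^2 \asymp |\tau_m|$, so the normalization constants $\|B_m\|$ are all comparable to $|\triangle|^{1/2}$ up to a constant ratio, precisely because (A4) forces $|\tau| \asymp |\triangle|$ for every triangle. Summing the single-triangle equivalence over all triangles $\tau \in \triangle$ --- using that the $L_2(\Omega)$ norm squared is the sum of the $L_2(\tau)$ norms squared --- yields $\|\sum_m \psi_m B_m^{\ast}\|_{L_2}^2 \asymp |\triangle| \sum_m (\psi_m/\|B_m\|)^2 \asymp \sum_m \psi_m^2$, which is exactly the claimed two-sided bound with constants $c_s, C_s$ depending only on $d$ and the quasi-uniformity constant $\pi$.

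The only genuine subtlety, and the step I would be most careful about, is that the Bernstein basis functions on adjacent triangles are \emph{not} disjointly supported as coefficient-indexed objects once one imposes the smoothness constraints of $\mathbb{S}_d^r(\triangle)$ --- but here we are working with the \emph{full} set $\{B_m\}_{m\in\Mcc}$ of Bernstein basis polynomials indexed over all triangles (each $B_m$ living on exactly one triangle), so the supports genuinely are essentially disjoint and the summation over triangles is clean; no coupling between triangles occurs and the constants do not degrade with the number of triangles. (If instead one needed the analogous statement restricted to the constrained spline space with the reparametrization $\Valpha = \Q_2\vpsi$, one would additionally invoke that $\Q_2$ is orthogonal, so $\|\Valpha\|_2 = \|\vpsi\|_2$, and the equivalence transfers verbatim; I would mention this only if the later analysis requires it.) I expect the write-up to be short: cite the single-triangle Bernstein mass-matrix bound, invoke (A4) to make all $|\tau|$ comparable, and sum.
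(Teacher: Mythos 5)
Your proposal is correct and takes essentially the same route as the paper: the paper simply cites the global stability bound from Lemma~1 of Lai and Wang (2013), namely $\bigl\|\sum_{m}c_m B_m\bigr\|_{L_2}^2 \asymp |\triangle|^2\sum_m c_m^2$, and combines it with $\|B_m\|\asymp|\triangle|$, whereas you re-derive that global bound from the single-triangle Bernstein mass-matrix equivalence plus the essentially disjoint supports and (A4) --- the same content, just with the cited lemma unpacked. The one bookkeeping slip in your display (the area factor on $\tau$ is $\asymp|\tau|^2$, not $|\tau|$, so in fact $\|B_m\|\asymp|\triangle|$ rather than $|\triangle|^{1/2}$) enters identically in the normalization $\|B_m\|^{-2}$ and cancels, so your conclusion is unaffected.
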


\begin{proof}
Notice that $\Big\|\sum_{m\in \mathcal{M}} \psi_{m}B_{m}^{\ast}\Big\|_{L_2}^{2} = \Big\|\sum_{m\in \mathcal{M}} (\psi_{m}\|B_{m}\|^{-1})B_{m}\Big\|_{L_2}^{2}$. According to Lemma 1 in \cite{lai2013bivariate}, we have 
\[
c_1|\triangle|^2\sum_{m\in \mathcal{M}}\|B_{m}\|^{-2}\psi_{m}^{2}\leq
	\Big\|\sum_{m\in \mathcal{M}} (\psi_{m}\|B_{m}\|^{-1})B_{m}\Big\|_{L_2}^{2}\leq C_2|\triangle|^2\sum_{m\in \mathcal{M}}\|B_{m}\|^{-2}\psi_{m}^{2}.
\]
Notice that $\|B_{m}\| \asymp |\triangle|$, $m \in \mathcal{M}$. Therefore, Lemma \ref{LEM:normequity} holds.
\end{proof}

\begin{lem}[Theorem 10.2, \cite{lai2007spline}]
\label{LEM:apporBi}
Suppose that $|\triangle|$ is a $\pi$-quasi-uniform triangulation of a polygonal domain $\Omega$, and $g(\cdot) \in  \mathcal{W}^{d+1,\infty}(\Omega)$.
\begin{itemize}
\item[(i)] For bi-integer $(a_{1},a_{2})$ with $0\leq {a_{1}}+{a_{2}} \leq d $, there exists a spline $g^{\ast}(\cdot)\in \mathbb{S}_{d}^{0}(\triangle)$ such that $\Vert \nabla_{z_{1}}^{a_{1}}\nabla_{z_{2}}^{a_{2}}\left(g-g^{\ast}\right) \Vert_{\infty}\leq C|\triangle|^{d+1-a_{1}-a_{2}}|\psi|_{d+1,\infty}$, where $C$ is a constant depending on $d$, and the shape parameter $\pi$.
\item[(ii)] For bi-integer $(a_{1},a_{2})$ with $0\leq {a_{1}}+{a_{2}} \leq d $, there exists a spline function $g^{\ast\ast}(\cdot)\in \mathbb{S}_{d}^{r}(\triangle)$ ($d\geq 3r+2$) such that $\Vert \nabla_{z_{1}}^{a_{1}}\nabla_{z_{2}}^{a_{2}}\left(g-g^{\ast\ast}\right) \Vert_{\infty}\leq C|\triangle|^{d+1-a_{1}-a_{2}}|g|_{d+1,\infty}$, where $C$ is a constant depending on $d$, $r$, and the shape parameter $\pi$.
\end{itemize}
\end{lem}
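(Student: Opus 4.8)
The plan is to prove this as a purely approximation-theoretic statement, via the quasi-interpolation approach standard for Bernstein--B\'ezier spline spaces. The heart of the argument is to exhibit a linear \emph{quasi-interpolation} operator $Q:\mathcal{W}^{d+1,\infty}(\Omega)\to\mathbb{S}_d^r(\triangle)$ (with $r=0$ for part (i)) enjoying three properties: (a) \emph{polynomial reproduction}, $Qp=p$ for every $p\in\mathbb{P}_d$; (b) \emph{local stability}, $\|Qf\|_{\infty,\tau}\le C\|f\|_{\infty,\mathrm{star}(\tau)}$ for each $\tau\in\triangle$, where $\mathrm{star}(\tau)$ is the union of triangles touching $\tau$ and $C$ depends only on $d$, $r$ and the shape parameter $\pi$; and (c) that $Q$ indeed maps into the claimed smoothness class. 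Granting such an operator, I would set $g^{\ast}:=Qg$ (resp.\ $g^{\ast\ast}:=Qg$) and estimate $g-Qg$ triangle by triangle.

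For the local estimate, fix $\tau$ and let $p_\tau\in\mathbb{P}_d$ be an averaged Taylor polynomial of $g$ over a disk inscribed in $\tau$. Writing $g-Qg=(g-p_\tau)-Q(g-p_\tau)$, which uses property (a) in the form $Qp_\tau=p_\tau$, I would bound the two pieces separately. The first piece is controlled by the Bramble--Hilbert lemma: for any bi-integer $(a_1,a_2)$ with $a_1+a_2\le d$,
\[
\|\nabla_{s_1}^{a_1}\nabla_{s_2}^{a_2}(g-p_\tau)\|_{\infty,\mathrm{star}(\tau)}\le C\,|\mathrm{star}(\tau)|^{\,d+1-a_1-a_2}\,|g|_{d+1,\infty}.
\]
For the second piece, $Q(g-p_\tau)$ restricted to $\tau$ is a polynomial of degree $\le d$, so a Markov (inverse) inequality on $\tau$ gives $\|\nabla_{s_1}^{a_1}\nabla_{s_2}^{a_2}Q(g-p_\tau)\|_{\infty,\tau}\le C\,|\tau|^{-(a_1+a_2)}\|Q(g-p_\tau)\|_{\infty,\tau}$, and property (b) turns the right-hand side into $C\,|\tau|^{-(a_1+a_2)}\|g-p_\tau\|_{\infty,\mathrm{star}(\tau)}$. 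Finally the $\pi$-quasi-uniformity (A4) yields $|\tau|\asymp|\mathrm{star}(\tau)|\asymp|\triangle|$ with constants depending only on $\pi$, so both pieces collapse to $C\,|\triangle|^{\,d+1-a_1-a_2}|g|_{d+1,\infty}$. Taking the maximum over $\tau$ gives the global bound, with the derivatives understood piecewise across edges in part (i), where only $C^0$ continuity is available.

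The real work, and where the hypotheses enter, is constructing $Q$ with properties (a)--(c). For part (i) this is comparatively soft: a $C^0$ spline is determined by its values at the domain points of degree $d$, and a nodal/Lagrange-type operator built from those point evaluations is automatically continuous, reproduces $\mathbb{P}_d$, and is locally stable since each domain-point value depends on $g$ only nearby. For part (ii) the obstacle is that $C^r$ smoothness across edges imposes linear constraints on the B\'ezier coefficients, so one cannot prescribe nodal values freely; instead one must select a \emph{stable local minimal determining set} and define the dual functionals on it. The existence of such a set---equivalently, a stable local basis for $\mathbb{S}_d^r(\triangle)$ on an arbitrary triangulation---is precisely what forces the degree restriction $d\ge 3r+2$, and establishing the uniform, shape-parameter-dependent stability constant is the most delicate step. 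I would therefore invoke the Bernstein--B\'ezier smoothness machinery (cross-edge $C^r$ conditions, minimal determining sets, and the stability of the associated local bases) to produce $Q$, after which the error analysis above applies verbatim to both parts.
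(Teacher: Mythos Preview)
The paper does not actually prove this lemma: it is quoted verbatim as Theorem~10.2 of \cite{lai2007spline} and used as a black box, with only the one-line remark that it shows $\mathbb{S}_d^0(\triangle)$ and (for $d\ge 3r+2$) $\mathbb{S}_d^r(\triangle)$ have full approximation power. So there is no in-paper proof to compare against.

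That said, your sketch is the standard quasi-interpolation argument from the cited reference: build a local, stable, polynomial-reproducing operator $Q$ via a minimal determining set (trivially for $r=0$, and via the Bernstein--B\'ezier smoothness machinery for $r\ge 1$, which is where $d\ge 3r+2$ enters), then combine Bramble--Hilbert with a Markov inverse estimate and quasi-uniformity to get the local error bound. This is correct in outline and matches the approach of \cite{lai2007spline}; for the purposes of this paper, simply citing the result as the authors do is entirely appropriate.
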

Lemma \ref{LEM:apporBi} shows that $\mathbb{S}_{d}^{0}(\triangle)$ has full approximation power, and $\mathbb{S}_{d}^{r}(\triangle)$ also has full approximation power if $d\geq 3r+2$.

\begin{lem}
\label{LEM:GammaInverse} Denote
$\bs{\Gamma}_{n, \rho}= n^{-1}(\mathbf{E}^{\top} \mathbf{E} + 2 n \rho \widetilde{\D}).$ Under Assumptions (A2) -- (A4), and (A6), there exist constants $0<c_{\Gamma}< C_{\Gamma}<\infty$, such that 
\begin{align}
c_{\Gamma} \leq \rho_{\min}(\bs{\Gamma}_{n, \rho}) \leq \rho_{\max}(\bs{\Gamma}_{n, \rho}) \leq C_{\Gamma}, 
\end{align}
with probability approaching to one, for large enough $n$.
\end{lem}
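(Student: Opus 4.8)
The plan is to control the eigenvalues of $\bs{\Gamma}_{n,\rho} = n^{-1}(\mathbf{E}^\top\mathbf{E} + 2n\rho\widetilde{\D})$ by splitting the matrix into the block coming from the Bernstein basis $\Bt$, the block coming from the transformed design $\widetilde{\X}_{\mathcal{N}}$, and their cross term, and then showing that the roughness-penalty contribution $2\rho\widetilde{\D}$ is asymptotically negligible. Write $\mathbf{E} = [\Bt, \widetilde{\X}_{\mathcal{N}}]$, so that $n^{-1}\mathbf{E}^\top\mathbf{E}$ has diagonal blocks $n^{-1}\Bt^\top\Bt$ and $n^{-1}\widetilde{\X}_{\mathcal{N}}^\top\widetilde{\X}_{\mathcal{N}}$ and off-diagonal block $n^{-1}\Bt^\top\widetilde{\X}_{\mathcal{N}}$.

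First I would analyze $n^{-1}\Bt^\top\Bt$. Since $\Bt(\s) = \Q_2^\top\B(\s)$ with $\Q_2$ having orthonormal columns, and $\{B_m^\ast\}$ is the standardized Bernstein basis, a standard empirical-process / law-of-large-numbers argument (as in \cite{lai2013bivariate,yu2019estimation}) using Assumption (A2) (bounded density) and Lemma \ref{LEM:normequity} gives that $n^{-1}\Bt^\top\Bt$ is, with probability tending to one, spectrally equivalent to its population version $\mathrm{E}[\Bt(\mathbf{S})\Bt(\mathbf{S})^\top]$, whose eigenvalues are bounded above and below by positive constants (using the $\pi$-quasi-uniformity in (A4) and the norm equivalence $\|B_m\|\asymp|\triangle|$). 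Next, for $n^{-1}\widetilde{\X}_{\mathcal{N}}^\top\widetilde{\X}_{\mathcal{N}}$, Assumption (A3) bounds the entries of $\widetilde{\X}$ and controls the eigenvalues $\lambda_1(\s)\le\cdots\le\lambda_{p'+1}(\s)$ of the relevant conditional second-moment matrix; again a uniform LLN shows the empirical Gram matrix concentrates around its bounded population counterpart. The cross term $n^{-1}\Bt^\top\widetilde{\X}_{\mathcal{N}}$ is handled by a Cauchy–Schwarz / block-matrix argument: one shows the Schur complement stays bounded away from zero, which amounts to checking that the columns of $\widetilde{\X}_{\mathcal{N}}$ are not asymptotically in the span of the spline columns — this follows because $\x(\s)$ is a genuine covariate with nondegenerate conditional variance given $\s$, encoded by $\lambda_1(\s)>0$ in (A3).

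Finally, I would dispose of the penalty term: $2\rho\widetilde{\D}$ is positive semidefinite (so it can only help the lower bound), and since $\D = \Q_2^\top\P\Q_2$ with $\P$ the roughness penalty matrix, its largest eigenvalue is of order $|\triangle|^{-4}$ by standard bivariate-spline estimates, so $2\rho\rho_{\max}(\widetilde{\D}) = O(\rho|\triangle|^{-4}) \to 0$ by Assumption (A6). Hence $2\rho\widetilde{\D}$ contributes a $o(1)$ perturbation to the upper bound and nothing harmful to the lower bound, and Weyl's inequality combines the three pieces to yield constants $0 < c_\Gamma < C_\Gamma < \infty$ with $c_\Gamma \le \rho_{\min}(\bs{\Gamma}_{n,\rho}) \le \rho_{\max}(\bs{\Gamma}_{n,\rho}) \le C_\Gamma$ with probability approaching one.

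The main obstacle I anticipate is the lower bound on the smallest eigenvalue, specifically ruling out near-collinearity between the spline block $\Bt$ and the design block $\widetilde{\X}_{\mathcal{N}}$: one must show the Schur complement of $n^{-1}\Bt^\top\Bt$ in $n^{-1}\mathbf{E}^\top\mathbf{E}$ is bounded below uniformly. This requires the uniform LLN over the growing spline basis to hold with a rate fast enough relative to $|\triangle|^{-1}$ (so that the number of basis functions, of order $|\triangle|^{-2}$, is $o(\sqrt{n})$-type as permitted under the stated assumptions), and careful bookkeeping of how the conditional-variance lower bound $\lambda_1(\s)$ propagates through the projection orthogonal to the spline space. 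The remaining steps — the two diagonal-block bounds and the penalty-term bound — are routine given the cited bivariate-spline machinery.
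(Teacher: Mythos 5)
Your proposal is sound and uses the same essential ingredients as the paper -- a law of large numbers to pass from the empirical Gram matrix to its population version, Assumptions (A2)--(A4) together with the norm equivalence of Lemma \ref{LEM:normequity} to bound that population matrix, and Assumption (A6) to make the $2\rho\widetilde{\D}$ term a vanishing perturbation of order $\rho|\triangle|^{-4}$ -- but it is organized differently. You decompose $n^{-1}\mathbf{E}^\top\mathbf{E}$ into its two diagonal blocks and a cross term, and you correctly identify that the delicate point is ruling out asymptotic collinearity between the spline columns and the covariate columns, which you propose to settle via a Schur-complement bound. The paper sidesteps the Schur complement entirely: it bounds the full quadratic form $(\bs{\theta}_{\mathcal{N}}^{\top},\bs{\psi}^{\top})\bs{\Gamma}_{n,0}(\bs{\theta}_{\mathcal{N}}^{\top},\bs{\psi}^{\top})^{\top}$ in one stroke by writing it as $n^{-1}\sum_i\{\sum_j\theta_jX_{ij}+\sum_m\psi_m B_m^\ast(\mathbf{s}_i)\}^2$ and conditioning on $\mathbf{s}$; given the location, the spline part is a deterministic scalar, so the eigenvalue bounds on the conditional second-moment matrix $\bs{\Upsilon}(\bs{s})=\mathrm{E}\{(1,\widetilde{\X}_{\mathcal{N}}^\top)^\top(1,\widetilde{\X}_{\mathcal{N}}^\top)\mid\mathbf{s}=\bs{s}\}$ from (A3) deliver the non-collinearity automatically, and integrating over $\bs{s}$ with (A2) and Lemma \ref{LEM:normequity} yields both the upper and lower bounds simultaneously. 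The conditioning argument is shorter and avoids the bookkeeping of propagating $\lambda_1(\bs{s})$ through a projection; your block route buys a more explicit picture of which block is responsible for which bound, at the cost of the extra Schur-complement step. One point in your favor: you explicitly flag that the uniform law of large numbers must hold at a rate compatible with the growing spline dimension of order $|\triangle|^{-2}$, a step the paper compresses into a single $o_P(1)$ without further justification.
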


\begin{proof} 
First, we prove that there exist some constant such that $c_{\Gamma} \leq \rho_{\min}(\bs{\Gamma}_{n, 0}) \leq \rho_{\max}(\bs{\Gamma}_{n, 0}) \leq C_{\Gamma}$ holds with probability approaching to one. Notice that 
\begin{align*}
 (\bs{\theta}_{\mathcal{N}}^{\top}, \bs{\psi}^{\top})\bs{\Gamma}_{n, 0}(\bs{\theta}_{\mathcal{N}}^{\top}, \bs{\gamma}^{\top})^{\top} & = \frac{1}{n} \sum_{i=1}^n\big\{ \sum_{\ell = 1}^{p'} \theta_j X_{ij} + \sum_{m \in \mathcal{M}} \psi_{m}B_{m}^{\ast}(\mathbf{s}_i) \big\}^2 \\
 & = \mathrm{E}\big\{ \sum_{\ell = 1}^{p'} \theta_j X_{ij} + \sum_{m \in \mathcal{M}} \psi_{m}B_{m}^{\ast}(\mathbf{s}_i) \big\}^2 +  o_P(1).
\end{align*}
Denote 
$
\bs{a}(\mathbf{s})= \left\{ \sum_{m \in \mathcal{M}} \psi_{m}B_{m}^{\ast}(\mathbf{s}),  \bs{\theta}_{\mathcal{N}}^{\top} \right\}, ~
\bs{\Upsilon}(\bs{s}) = \mathrm{E}\left\{(1, \widetilde{\mathbf{X}}_{\mathcal{N}}^\top)^\top (1, \widetilde{\mathbf{X}}_{\mathcal{N}}^\top)\Big|\mathbf{s} = \bs{s}\right\}.
$ By Assumptions (A2), (A3) and (A4) and Lemma \ref{LEM:normequity}, we have
\begin{align*}
& \mathrm{E} \Big[ \mathrm{E} \big\{ \sum_{\ell = 1}^{p'} \theta_j X_j + \sum_{m \in \mathcal{M}} \psi_{m}B_{m}^{\ast}(\mathbf{s}) \big\}^2 \Big| \mathbf{s} \Big]= \mathrm{E} \left\{\bs{a}(\mathbf{s}) \bs{\Upsilon} (\mathbf{s} ) \bs{a}(\mathbf{s})^\top \right\}\\
&\leq C\sum_{j=1}^{p'} \theta_j^2  + C \left\Vert\sum_{m \in \mathcal{M}} \psi_{m}B_{m}^{\ast}\right\Vert^{2}_{L_2} \leq C_{\Gamma} \left(\|\bs{\theta}_{\mathcal{N}}\|^2 + \|\bs{\psi} \|^2\right).
\end{align*}
Similarly, we have $\mathrm{E}\big\{ \sum_{\ell = 1}^{p'} \theta_j X_{ij} + \sum_{m \in \mathcal{M}} \psi_{m}B_{m}^{\ast}(\mathbf{s}_i) \big\}^2 \geq  c_{\Gamma}  \left(\|\bs{\theta}_{\mathcal{N}}\|^2 + \|\bs{\psi} \|^2\right)$.  Then, $c_{\Gamma} \leq \rho_{\min}(\bs{\Gamma}_{n, 0}) \leq \rho_{\max}(\bs{\Gamma}_{n, 0}) \leq C_{\Gamma}$  follows. Similar to the arguments in Lemma B.8 in the supplementary materials in \cite{yu2019estimation}, we have $ (\bs{\theta}_{\mathcal{N}}^{\top}, \bs{\psi}^{\top})\widetilde{\D}(\bs{\theta}_{\mathcal{N}}^{\top}, \bs{\psi}^{\top})^{\top} = |\triangle|^{-4}\left(\|\bs{\theta}_{\mathcal{N}}\|^2 + \|\bs{\psi} \|^2\right)$. Therefore, under Assumption (A6), Lemma \ref{LEM:GammaInverse} holds.
\end{proof}


\section{Proof of Lemma \ref{LEM:oracle}}
\begin{proof}
According to Lemma \ref{LEM:apporBi}, there exists $\bs{\psi}^{\ast}$ satisfying $\sup_{\bs{s} \in \Omega} |g(\bs{s})-\sum_{m \in \mathcal{M}}B_m^\ast(\bs{s})\psi_m^{\ast}| = |\triangle|^{(d+1)}$. By the definition of the oracle estimator, it is easy to verify that 
\begin{align}
(\widehat{\bs{\psi}}^o, \widehat{\bs{\theta}}^o) - (\bs{\psi}^{\ast}, \bs{\theta}_{\mathcal{N}}) =n^{-1} \bs{\Gamma}_{n, \rho}^{-1}\left[\mathbf{E}^\top \{\mathbf{Y} - (\bs{\psi}^{\ast \top}, \bs{\theta}_{\mathcal{N}}^{\top})\mathbf{E}\} + 2 n \rho \widetilde{\D} (\bs{\psi}^{\ast \top}, \bs{\theta}_{\mathcal{N}}^{\top})^{\top}\right],
\end{align}
and thus,
\begin{align}
\|(\widehat{\bs{\psi}}^o, \widehat{\bs{\theta}}^o) - (\bs{\psi}^{\ast}, \bs{\theta}_{\mathcal{N}})\| \leq C n^{-1} \left\{ \|\mathbf{E}^\top \bs{\delta}\|+ \|\mathbf{E}^\top \bs{\epsilon}\| + 2 n \rho \|\widetilde{\D} (\bs{\psi}^{\ast \top}, \bs{\theta}_{\mathcal{N}}^{\top})^{\top}\| \right\},
\end{align}
where $\bs{\delta}^{\top} = \{g(\mathbf{s}_i) - \sum_{m \in \mathcal{M}}B_m^\ast(\mathbf{s}_i)\psi_m^{\ast}\}_{i=1}^n$. 
Notice that $\|\mathbf{E}^\top \bs{\delta}\| = O_P\{n|\triangle|^{(d+1)}\}$ and $\textrm{E}\left(\|\mathbf{E}^\top \bs{\epsilon}\|^2\right) = \textrm{tr}(\mathbf{E} \mathbf{E}^\top)$ and $\|\mathbf{E}^\top \bs{\epsilon}\| =O_P(n^{1/2} |\triangle|^{-1})$. 
According to the results in Lemma B.9 in \cite{yu2019estimation}, $n \rho \|\widetilde{\D} (\bs{\psi}^{\ast \top}, \bs{\theta}_{\mathcal{N}}^{\top})^{\top}\| = O(n \rho  |\triangle|^{-4})$. Therefore, we have 
$
\|(\widehat{\bs{\psi}}^o, \widehat{\bs{\theta}}^o) - (\bs{\psi}^{\ast}, \bs{\theta}_{\mathcal{N}})\| = O_P(|\triangle|^{(d+1)} + n^{-1/2} |\triangle|^{-1} + \rho |\triangle|^{-4}).
$
Combining the conclusion in Lemma \ref{LEM:normequity}, we establish Theorem \ref{THE:consistency}.
\end{proof}

\section{Proof of Theorem \ref{THE:consistency}}
\begin{proof}
Given a specific spanning tree, we have the following minimization problem after the linear transformation:
\begin{align}
(\hat{\vpsi},\hat{\vtheta}) = \arg\min_{\vpsi,\vtheta} 
\frac{1}{2n}\|\y - \Bt\vpsi - \widetilde{\X} \vtheta\|_2^2 + \rho \vpsi^\top \D \vpsi + \lambda \sum_{l \in \Ac} |[\vtheta]_l|.
\label{EQU:psitheta}
\end{align}
Notice that any $(\vpsi, \vtheta)$ satisfies the following KKT conditions: 
\begin{itemize}
\item[C1.] $\Bt^{\top}(\y - \Bt\vpsi - \widetilde{\X} \vtheta) {{-}} 2 n \rho  \D \vpsi  = \mathbf{0},$
\item[C2.] For $k \notin \mathcal{A}$, $\widetilde{\X}_k^{\top}(\y - \Bt\vpsi - \widetilde{\X} \vtheta)   = 0,$
\item[C3.] For $k \in \mathcal{A}$, $\widetilde{\X}_k^{\top}(\y - \Bt\vpsi - \widetilde{\X} \vtheta)   =  n \lambda \tau_k$, where $\tau_k = \textrm{sign} (\theta_k)$, if $\theta_k \neq 0$, and $|\tau_k| \leq 1$, if $\theta_k = 0$.
\end{itemize}
is the unique minimizer of (\ref{EQU:psitheta}).

Denote vector $\bs{v} = (v_m, m = 1, \ldots, p)^{\top}$ with elements $v_m = \omega_k\textrm{sign} (\widehat{\theta}^o_m) I(m \in \mathcal{A} \cap \mathcal{N})$. Then we define $\widehat{\bs{\eta}} = [\{\bs{\Gamma}_{n, \rho}^{-1} (\mathbf{E}^\top \y - \lambda n \bs{v})\}^{\top}, \mathbf{0}^{\top}]^\top$. In the following, we prove that $\widehat{\bs{\eta}}$ satisfies KKT conditions with probability approaching one. Plugging in $\widehat{\bs{\eta}}$ to $\mathbf{E}^{\top}(\y - \Bt\vpsi - \widetilde{\X} \vtheta) - 2 n \rho \widetilde{\D} \vpsi $, we have 
\begin{align}
\mathbf{E}^{\top}\{\y - \mathbf{E}\bs{\Gamma}_{n, \rho}^{-1}&(\mathbf{E}^\top \y - \lambda n \bs{v})\}- 2 n \rho  \widetilde{\D}   \bs{\Gamma}_{n, \rho}^{-1}(\mathbf{E}^\top \y - \lambda n \bs{v}) 
 = \lambda n \bs{v}.
\end{align}
According to the definition of $\bs{v}$, it implies that $\widehat{\bs{\eta}}$ satisfies Conditions C1 and C2 and C3 for $k \in \mathcal{N} \cap \mathcal{A}$. 

Next, we prove for $k \in \mathcal{N}^c \cap \mathcal{A}$, $|\widetilde{\X}_k^{\top}(\y - \Bt\vpsi - \widetilde{\X} \vtheta)| \leq \lambda \omega_k  n$. According Assumption (A5), 
\begin{align}\label{Eq: proof_them1_1}
&
\mathbb{P}  \{|\widetilde{\X}_k^{\top}(\y - \Bt\vpsi - \widetilde{\X} \vtheta)| > \lambda \omega_k  n, \exists k \in \mathcal{N}^c \cap \mathcal{A}\}\notag \\
\le 
&
\mathbb{P}  \{|\widetilde{\X}_k^{\top} \bs{\epsilon}| > \lambda \omega_k  n/3, \exists k \in \mathcal{N}^c \cap \mathcal{A}\} +  \mathbb{P}  \{|\widetilde{\X}_k^{\top} \bs{\delta}| > \lambda \omega_k  n/3, \exists k \in \mathcal{N}^c \cap \mathcal{A}\} \notag \\
&  + \mathbb{P} \left\{ |\widetilde{\X}_k^{\top}\bs{\Gamma}_{n, \rho}^{-1} \bs{v}| > \omega_k/3, \exists k \in \mathcal{N}^c \cap \mathcal{A}\right\}\notag \\
\le
&
\sum_{k \in \mathcal{N}^c \cap \mathcal{A}} \mathbb{P}  \{|\widetilde{\X}_k^{\top} \bs{\epsilon}| > \lambda n \omega_k /3\} + \mathbb{P}  \{|\widetilde{\X}_k^{\top} \bs{\delta}| > \lambda n \omega_k/3\} + \mathbb{P} \{ |\widetilde{\X}_k^{\top}\mathbf{E}\bs{\Gamma}_{n, \rho}^{-1} \bs{v}| > \omega_k/3\}
\end{align}
where $ \bs{\delta} = \bs{\mu} - n ^{-1}\mathbf{E} \bs{\Gamma}_{n, \rho}^{-1}  \mathbf{E}^{\top}(\bs{\mu} + \bs{\epsilon})$ and $\bs{\mu} = \{\mu_1, \ldots, \mu_n\}^{\top}$ being the expectation of $\y$.

For each term in (\ref{Eq: proof_them1_1}), we give the following bounds. 
Note that $\|\widetilde{\X}_k\| \le {n^{1/2}}C$ for $k \in \mathcal{N}^c \cap \mathcal{A}$ and some positive constant $C$. 
By the sub-Gaussian assumption, we have $\mathbb{P}  \{|\widetilde{\X}_k^{\top} \bs{\epsilon}| > \lambda \omega_k  n/3\} \le {2 \exp(-c_1n\lambda^2\omega_k^2)} $. 
The results in Lemma \ref{LEM:oracle} implies  $n^{-1/2}\|\bs{\delta}\| = O_P\{|\triangle|^{(d+1)} + n^{-1/2} |\triangle|^{-1} + \rho |\triangle|^{-4}\}$. 
By Assumption (A7) that $\lambda\omega_k \gg |\triangle|^{(d+1)} + n^{-1/2} |\triangle|^{-1} + \rho |\triangle|^{-4}$, we have $\mathbb{P}  \{|\widetilde{\X}_k^{\top} \bs{\delta}| > \lambda \omega_k  n/3,  \exists k \in \mathcal{N}^c \cap \mathcal{A}\} =o_P(1)$. 
According the results in Lemma \ref{LEM:GammaInverse}, the eigenvalues of $n\bs{\Gamma}_{n, \rho}^{-1}$ are bounded by some positive constants $C_1$ and $C_2$. Therefore, the elements in the vector $\mathbf{E} \bs{\Gamma}_{n, \rho}^{-1}\bs{v}$ are with the order of $O_P(n^{-1} |\triangle|^{-1}|\mathcal{N}\cap \mathcal{A}|^{1/2})$, which implies $\widetilde{\X}_k^\top\mathbf{E}\bs{\Gamma}_{n, \rho}^{-1} \bs{v} = O_P(|\triangle|^{-1}|\mathcal{N}\cap \mathcal{A}|^{1/2})$, $\forall k \in \mathcal{N}^c \cap \mathcal{A}$.
Also with Assumption (A7) that $ |\triangle|^{-1}|\mathcal{N}\cap \mathcal{A}|^{1/2}  \ll \omega_k$, it holds $\mathbb{P} \{ |\widetilde{\X}_k^{\top}\mathbf{E}\bs{\Gamma}_{n, \rho}^{-1} \bs{v}| > \omega_k/3\} = o_p(1)$.
To summarize, we have
\begin{align}
&
\mathbb{P}  \{|\widetilde{\X}_k^{\top}(\y - \Bt\vpsi - \widetilde{\X} \vtheta)| > \lambda \omega_k  n, \exists k \in \mathcal{N}^c \cap \mathcal{A}\} \notag\\
\le 
&
\sum_{k \in \mathcal{N}^c \cap \mathcal{A}} 2 \exp(-c_1n\lambda^2\omega_k^2) + o_p(1) = o_P(1),
\end{align}
Hence, $\widehat{\bs{\eta}}$ satisfies Conditions C3 for $k \in \mathcal{N}^c \cap \mathcal{A}$. 
And we prove that $\widehat{\bs{\eta}}$ is the minimizer of objective function (\ref{EQU:psitheta}).

Furthermore, note that $\|\widehat{\bs{\eta}} - (\widehat{\bs{\psi}}^{o \top}, \widehat{\bs{\theta}}^{o \top}, \mathbf{0}^{\top})^{\top} \| = \| \lambda n\bs{\Gamma}_{n, \rho}^{-1} \bs{v} \| = O_P(\lambda |\mathcal{N}\cap \mathcal{A}|^{1/2})$. Combining the results in Lemma \ref{LEM:oracle}, Theorem \ref{THE:consistency} holds.
\end{proof}






\bibliographystyle{apalike}
\bibliography{reference}
\end{document}